\newtheorem{theorem}{Theorem}
\newtheorem{lemma}{Lemma}
\newtheorem{example}{Example}
\newcommand{\argmin}[2]{\textrm{argmin}_{#1}~#2}
\newcommand{\E}[2]{\mathbb{E}_{#1}\left[#2\right]}
\newcommand\independent{\protect\mathpalette{\protect\independenT}{\perp}}
\def\independenT#1#2{\mathrel{\rlap{$#1#2$}\mkern2mu{#1#2}}}
\newcommand{\X}[0]{\mathcal{X}}
\newcommand{\p}[0]{\mathcal{P}}
\newcommand{\Y}[0]{\mathcal{Y}}
\newcommand{\D}[0]{\mathcal{D}}
\newcommand{\N}[0]{\mathcal{N}}
\newcommand{\R}[0]{\mathbb{R}}
\newcommand{\ra}{\rightarrow}
\newif\ifsinglecolumn
\newcommand{\Ndata}[0]{n_{\mathrm{data}}}
\newcommand{\Pdata}[0]{\mathcal{P}_{\mathrm{data}}}
\newcommand{\dsim}[0]{\D^{\mathrm{sim}}}
\newcommand{\Nsim}[0]{n_{\mathrm{sim}}}
\newcommand{\Psim}[0]{\mathcal{P}_{\mathrm{sim}}}
\newcommand{\psim}[0]{p_{\mathrm{sim}}}
\newcommand{\cwola}{\textsc{CWoLa}\xspace}
\newcommand{\multicwola}{\textsc{Multi-CWoLa}\xspace}
\newcommand{\salad}{\textsc{SALAD}\xspace}
\newcommand{\multisalad}{\textsc{Multi-SALAD}\xspace}
\newcommand{\switchsalad}{\textsc{SALAD-Switch}\xspace}
\newcommand{\F}[0]{\mathcal{F}}
\newcommand{\G}[0]{\mathcal{G}}
\newcommand{\h}[0]{\mathcal{H}}
\begin{document}

\title{Resonant Anomaly Detection with Multiple Reference Datasets}

\author{Mayee F. Chen,$^{a}$}
\author{Benjamin Nachman,$^{b,c}$}
\author{and Frederic Sala$^{d}$}

\affiliation{
\phantom{ }\hspace{-0.12in}$^a$Computer Science Department, Stanford University, Stanford, CA 94305, USA \\
\phantom{ }\hspace{-0.12in}$^b$Physics Division, Lawrence Berkeley National Laboratory, Berkeley, CA 94720, USA \\
\phantom{ }\hspace{-0.12in}$^c$Berkeley Institute for Data Science, University of California, Berkeley, CA 94720, USA \\
\phantom{ }\hspace{-0.12in}$^d$Department of Computer Sciences, University of Wisconsin, Madison, WI 53706, USA \\
}

\emailAdd{mfchen@stanford.edu}
\emailAdd{bpnachman@lbl.gov}
\emailAdd{fredsala@cs.wisc.edu}

\abstract{
An important class of techniques for resonant anomaly detection in high energy physics builds models that can distinguish between reference and target datasets, where only the latter has appreciable signal. Such techniques, including Classification Without Labels (\cwola) and  Simulation Assisted Likelihood-free Anomaly Detection (\salad) rely on a single reference dataset. They cannot take advantage of commonly-available multiple datasets and thus cannot fully exploit available information. In this work, we propose generalizations of \cwola and \salad for settings where multiple reference datasets are available, building on weak supervision techniques. We demonstrate improved performance in a number of settings with realistic and synthetic data. As an added benefit, our generalizations enable us to provide finite-sample guarantees, improving on existing asymptotic analyses.
}

\maketitle

\clearpage
\section{Introduction}
\label{sec:intro}

Due to the vast parameter space of Standard Model extensions and to the lack of significant evidence for new particles or forces of nature, a new model-agnostic search paradigm has emerged.  Many of these \textit{anomaly detection} (AD) strategies are enabled by machine learning (see e.g. Ref.~\cite{hepmllivingreview,Karagiorgi:2021ngt,Kasieczka:2021xcg,Aarrestad:2021oeb}) and the first results with collision data are now available~\cite{collaboration2020dijet,ATLAS-CONF-2022-045}. One way to characterize AD methods is based on their physics assumption of the new phenomena~\cite{Karagiorgi:2021ngt}.  Strategies that assume the new physics is ``rare''~\cite{Kasieczka:2022naq} estimate (explicitly or implicitly) the data probability density and focus on events with low density.  In contrast, techniques that assume the new physics will manifest as an overdensity in phase space use likelihood ratio methods to compare a reference dataset to a target dataset. The latter approach has been extensively studied in the context of \textit{resonant anomaly detection}~\cite{2107.02821}, where one resonant feature (usually a mass) is used to create a sideband region (reference dataset) nearly devoid of any anomalous events and a signal region (target dataset) that may contain anomalies. The reference dataset is used to estimate the presence of anomalies in the target dataset via interpolation.

Many existing approaches are defined using one reference dataset and one target dataset~\cite{Collins:2018epr,Collins:2019jip,DAgnolo:2018cun,DAgnolo:2019vbw,1815227,Andreassen:2020nkr,Nachman:2020lpy,Amram:2020ykb,Hallin:2021wme,Raine:2022hht,dAgnolo:2021aun,Chakravarti:2021svb,Dillon:2022tmm,Letizia:2022xbe,2203.09601,Alvi:2022fkk,Raine:2022hht}. However, in practice one can have access to or construct \emph{multiple} references.  First, there may exist multiple resonant features that can be used to construct sideband and signal regions. For instance, when a particle decays into two new particles, the decay products can be used to construct all three intermediate resonances, a setting present in the LHC Olympics Dataset~\cite{Kasieczka:2021xcg}. Second, there may also exist multiple independent Standard Model simulators available for producing a dataset (e.g. Pythia~\cite{Sjostrand:2007gs}, Herwig~\cite{Bellm:2015jjp}, or Sherpa~\cite{Sherpa:2019gpd}). Using multiple reference datasets may improve performance, but it is not clear how to incorporate all of their information when using existing methods designed for a single set.

We explore two generalizations of resonant AD to multiple reference datasets. First, we consider Classification Without Labels (\cwola)~\cite{Metodiev:2017vrx,Collins:2018epr,Collins:2019jip}, in which the reference is simply the sideband region---a form of weak supervision where the noisy label of ``signal'' is assigned to events in the signal region and the noisy label of `background' to events in the sideband region. 
We propose a new method, \multicwola, that builds multiple reference datasets by constructing signal and sideband regions along different resonant features. We consider a point's membership in each feature's signal region as a noisy vote for anomaly, learn weights on each vote, and aggregate them to produce a higher-quality noisy label. We demonstrate \multicwola's performance on the LHC Olympics Dataset~\cite{Kasieczka:2021xcg}.

Next, we study Simulation Assisted Likelihood-free Anomaly Detection (\salad)~\cite{Andreassen:2020nkr}. In this method, a reweighting function between a reference simulation dataset and a target dataset is learned in the sideband conditioned on the resonant feature. The simulated events in the signal region are reweighted by interpolating this function and then are used to distinguish anomalies in the target dataset.  
We extend this to the case of multiple simulated datasets, each of which may make different approximation choices and thus provide complementary accuracy when using \salad. We introduce \multisalad, which combines the simulated datasets accordingly and then reweights, with the key finding that combining data helps when each simulator approximates different components of the background well. We demonstrate \multisalad's performance on synthetic data.

Finally, we study the finite sample guarantees of our proposed methods. Many resonant AD methods have optimality guarantees in some asymptotic limit, but there is no first-principles understanding of the methods' performance with finite samples.  
In particular, approaches like the ones described above that use classifiers to distinguish a reference dataset from a target dataset approximate the data-to-background likelihood ratio.  When the reference (physics) model is correct, this approach will converge to the optimal Neyman-Pearson likelihood ratio test in the limit of infinite statistics, complex enough classifier architecture, and flexible enough training procedure~\cite{neyman1933ix,Nachman:2020lpy}.  However, a finite sample understanding of these approaches is lacking.
We draw on results from statistical theory to begin a formal study of resonant AD methods with limited data. Our results lay a foundation for future investigations into the finite sample properties of AD and related methods.

This paper is organized as follows.  Section~\ref{sec:setup} briefly set up the resonant AD setting and then \multicwola and \multisalad are introduced in Secs.~\ref{sec:method} and~\ref{sec:salad}, respectively.  The paper ends with conclusions and outlook in Sec.~\ref{sec:conclusion}.

\section{Problem Setup}
\label{sec:setup}

We have an input space of \emph{discriminating features} $x \in \X$ and $k$ \emph{resonant features} $m = [m^1, \dots, m^k] \in \R^k$. Associated with a point $(x, m)$ is an unknown label $y \in \Y$ for $\Y = \{0, 1\}$ (background vs. signal). Points $(x, m, y)$ are drawn from a distribution $\mathcal{P}$ with density $p(\cdot)$. 
For a resonant feature $m^i \in \R$, an interval $\mathcal{I}_{m^i} \subset \R$ is used to define a signal region $SR_i = \{(x,m): m^i \in \mathcal{I}_{m^i}\}$ and a sideband region $SB_i = \{(x,m): m^i \notin \mathcal{I}_{m^i}\}$ (when the resonant feature is obvious, the $i$ is dropped and we use $SR$ and $SB$). We assume that the sideband region contains little to no signal, i.e., $p(y = 1 | (x, m) \in SB) \approx 0$. Our goal is to construct a predictor $f: \X \ra \Y$ to perform anomaly detection.

\section{\multicwola: Learning from Multiple Resonant Features}
\label{sec:method}
We introduce \multicwola, an approach to anomaly detection that uses multiple reference datasets and is built using principles from the area of weak supervision \cite{Ratner18, Fu20}.

\paragraph{Standard \cwola} We have one unlabeled dataset $\D = \{(x_i, m_i)\}_{i = 1}^n$ with one resonant feature ($k = 1$) that we want to use to learn $f$. We use $m$ to construct the signal and sideband regions, $\D_{SR}, \D_{SB} \subset \D$ where $\D_{SR} = \D \cap SR$ and $\D_{SB} = \D \cap SB$, with distributions $p_{SR}$ and $p_{SB}$ respectively. With the intuition that there are more anomalies in the signal region, we express each distribution as a mixture of signal and background components with weight $0\leq \eta_{SR}, \eta_{SB} \leq 1$:
\begin{align}
    p_{SR}(x) = \eta_{SR} p(x | y = 1) + (1 - \eta_{SR}) p(x | y = 0) \\
    p_{SB}(x) = \eta_{SB} p(x | y = 1) + (1 - \eta_{SB}) p(x | y = 0)
\end{align}

Under this construction, the density ratio of the mixtures $\frac{p_{SR}(x)}{p_{SB}(x)}$ can be written in terms of the ratio of the signal and background components, $r(x) = \frac{p(x | y = 1)}{p(x | y = 0)}$, as $\frac{p_{SR}(x)}{p_{SB}(x)} = \frac{\eta_{SR} r(x) + 1 - \eta_{SR}}{\eta_{SB} r(x) + 1 - \eta_{SB}}$. Assuming $\eta_{SR} > \eta_{SB}$ (e.g. more signal in the signal region), the mixture ratio is monotonically increasing in $r(x)$. Therefore, we train a classifier $f$ to learn $\frac{p_{SR}(x)}{p_{SB}(x)}$ by distinguishing between $\D_{SR}$ and $\D_{SB}$, and this $f$ provides information about $r(x)$ and can be used for anomaly detection.

\subsection{Multi-CWoLa Method}

Intuitively, \cwola uses the resonant feature $m$ as a noisy label that identifies the signal versus sideband region and then trains a classifier using these. This idea leads to a simple question---if more than one such feature is available $(k > 1)$, how can the multiple noisy labels best be utilized? We tackle this question using principles from weak supervision~\cite{Ratner18, ratner2016data, ratner2019training, Fu20}.

\subsubsection{Model}

In our approach, we split $\D$ along each resonant feature $m^i$ to produce  pairs of datasets $\D_{SB_i}$ and $\D_{SR_i}$ for each $i \in [k]$ based on membership in $I_{m^i}$.
A straightforward way to use all datasets $(\D_{SB_1}, \D_{SR_1}), \dots, (\D_{SB_k}, \D_{SR_k})$ is to apply standard \cwola $k$ times by training $k$ classifiers that we can then ensemble or average.
Instead, in \multicwola, we construct a binary vector per $x$ consisting of $k$ noisy membership labels, $\mathbf{M}(m) = \{M_1(m), \dots, M_k(m)\} \in \{0, 1\}^k$, where $M_i(m) = 1$ if $(x, m) \in \D_{SR_i}$ and $M_i(m) =0$ if $(x, m) \in \D_{SB_i}$. We propose to directly aggregate these labels $\mathbf{M}(m)$ into an estimate of $y$, $\hat{y}$, and train a classifier on the aggregated $\hat{y}$ along with the discriminative features $x$. Since each $M_i(m)$'s ``vote'' can have different correlation with the true $y$, we aim to combine the votes in a weighted fashion. We cannot directly measure each membership label's accuracy since the true $y$ is unknown, so we draw on methods from weak supervision.

We model the distribution $p(y, \mathbf{M}(m))$ as a probabilistic graphical model with the following parametrization:
\begin{align}
    p(y, \mathbf{M}(m); \theta) = \frac{1}{Z} \exp \bigg(\theta_y \widetilde{y} + \sum_{i = 1}^k \theta_i \widetilde{M}_i(m) \widetilde{y} \bigg), \label{eq:pgm}
\end{align}

where $\theta = \{\theta_y, \theta_i \; \forall i \in [k] \}$ are the canonical parameters of the distribution, $Z$ is for normalization, and $\widetilde{y}$ and $\widetilde{M}_i(m)$ are $y$ and $M_i(m)$ scaled from $\{0, 1\}$ to $\{-1, 1\}$. Intuitively, $\theta_i$ represents the (unobserved) strength of the correlation between $M_i(m)$ and $y$ and thus captures a notion of $M_i$'s accuracy. This model also implies, for simplicity, that $M_i(m) \perp M_j(m) | y$; that is, the resonant features are conditionally independent given $y$.\footnote{We can model some dependencies among resonant features if desired (see~\cite{Fu20} for a method and see~\cite{varma2019learning} for how to learn if resonant features are not conditionally independent). However, we need at least three conditionally independent subsets of resonant features in $\mathbf{M}(m)$ in order for the estimation method from~\cite{Fu20} to recover the correct parameters.}

Our goal is to estimate the parameters of the graphical model and use them to perform inference, producing aggregated weak labels $\hat{y}$ from the distribution $p(y = 1 | \mathbf{M}(m); \theta)$ given a vector of noisy labels $\mathbf{M}(m)$.

\subsubsection{Parameter Estimation}

We first learn the parameters of $p(y, \mathbf{M}(m); \theta)$ as defined in~\eqref{eq:pgm}.
Of key interest is the accuracy parameter $\alpha_i = p(M_i(m) = 1 | y = 1) = p(M_i(m) = 0 | y = 0)$ of the $i$th resonant feature, which corresponds to the canonical parameter $\theta_i$ (see~\cite{wainwright2008graphical} for more background on probabilistic graphical models). 
We estimate the accuracy parameters by adapting the \emph{triplet} approach from \cite{Fu20}.  First, we draw triplets of resonant features $a, b, c \in [k]$.\footnote{We assume that $k \ge 3$. In Lemma~\ref{lemma:small_k}, we discuss why having $k = 1$ or $k = 2$ resonant features does not recover a unique model.} If the distribution on $y, \mathbf{M}(m)$ follows the graphical model in~\eqref{eq:pgm}, it holds that $y M_a(m) \perp y M_b(m) $ if $M_a(m) \perp M_b(m) | y$. Then, we have that $\mathbb{E}[\widetilde{y}\widetilde{M}_a(m)]\mathbb{E}[\widetilde{y}\widetilde{M}_b(m)] = \mathbb{E}[\widetilde{M}_a(m)\widetilde{M}_b(m)]$ since $\widetilde{y}^2 = 1$. Writing one such equation for each pair in the triplet $(a, b, c)$, we have that 
\begin{align*}
    \mathbb{E}[\widetilde{y}\widetilde{M}_a(m)]\mathbb{E}[\widetilde{y}\widetilde{M}_b(m)] &= \mathbb{E}[\widetilde{M}_a(m)\widetilde{M}_b(m)]\\
    \mathbb{E}[\widetilde{y}\widetilde{M}_a(m)]\mathbb{E}[\widetilde{y}\widetilde{M}_c(m)] &= \mathbb{E}[\widetilde{M}_a(m)\widetilde{M}_c(m)]\\
    \mathbb{E}[\widetilde{y}\widetilde{M}_b(m)]\mathbb{E}[\widetilde{y}\widetilde{M}_c(m)] &= \mathbb{E}[\widetilde{M}_b(m)\widetilde{M}_c(m)].
\end{align*}
Solving this system, we obtain
\begin{align*}
    |\mathbb{E}[\widetilde{y}\widetilde{M}_a(m)]| = \sqrt{\bigg|\frac{\mathbb{E}[\widetilde{M}_a(m)\widetilde{M}_b(m)]\mathbb{E}[\widetilde{M}_a(m)\widetilde{M}_c(m)]}{\mathbb{E}[\widetilde{M}_b(m)\widetilde{M}_c(m)]}\bigg|},
\end{align*}
and similarly for $b$ and $c$.
We assume that each signal region is positively correlated with the true signal, which allows for us to ignore the absolute value and uniquely recover $\mathbb{E}[\widetilde{y}\widetilde{M}_a(m)]$. Next, we can use $\mathbb{E}[\widetilde{y}\widetilde{M}_a(m)] = 2 p(\widetilde{y} = \widetilde{M}_a(m)) - 1$ to obtain $\alpha_i$ using properties of the graphical model in~\eqref{eq:pgm}. Note that in practice, all of these quantities are empirical estimates, with terms such as $\hat{\mathbb{E}}[\widetilde{M}_a(m)\widetilde{M}_b(m)] = \frac{1}{n} \sum_{i=1}^n \widetilde{M}_a(m_i)\widetilde{M}_b(m_i)$.

\subsubsection{Inference and Training}
After we learn the accuracy parameters, we use them to estimate $p(y = 1| \mathbf{M}(m))$ for a given $\mathbf{M}(m)$. We use Bayes' rule and the conditional independence among $\mathbf{M}(m)$ to write $p(y | \mathbf{M}(m)) = \frac{\prod_{i = 1}^mp(M_i(m) | y = 1) p(y = 1)}{p(\mathbf{M}(m))}$. We assume that the class balance $p(y = 1)$ is known; otherwise, it can be estimated via tensor decomposition~\cite{ratner2019training}. $p(M_i(m) | y = 1)$ is either equal to $\alpha_i$ if $M_i(m) = 1$ or $1 - \alpha_i$ if $M_i(m) = 0$, and the denominator $p(\mathbf{M}(m))$ can be either directly estimated since all quantities are observable or computed as $\prod_{i = 1}^m p(M_i(m) | y = 1) p(y = 1) + \prod_{i = 1}^m p(M_i(m) | y = 0) p(y = 1)$ using the estimated accuracies and class balance. 

Once $p(y = 1 | \mathbf{M}(m))$ is estimated for all $\mathbf{M}(m) \in \{0, 1\}^k$, the aggregated weak label $\hat{y}$ is drawn from such distribution. With labels $\hat{y}$ for each $(x, m) \in \D$, we train a classifier $\hat{f}$ on the weakly labeled dataset $\{(x, \hat{y})\}_{i = 1}^n$. This procedure is summarized in Algorithm~\ref{alg:multicwola}.

\begin{algorithm}[t]
\caption{\multicwola}
\begin{algorithmic}[1]
\STATE \textbf{Input:} Dataset $\D = \{(x_i, m_i)\}_{i = 1}^n$; thresholds $\mathcal{I}_{m^i}$ that split $\D$ into signal and sideband regions, $\D_{SR_i}$ and $\D_{SB_i}$ respectively, for each $m^i$; class balance probability of anomaly $p(y = 1)$
\STATE Construct noisy label $M_i(m) = \begin{cases} 1 & (x, m) \in \D_{SR_i} \\ 0 & (x, m) \in \D_{SB_i} \end{cases}$ for each resonant feature $m^i$.
\FOR{each triplet $a, b, c \in [k]$}
\STATE   \begin{align}
        \beta_a &:= \sqrt{\big|\hat{\mathbb{E}}[\widetilde{M}_a(m) \widetilde{M}_b(m)] \hat{\mathbb{E}}[\widetilde{M}_a(m) \widetilde{M}_c(m)] / \hat{\mathbb{E}}[\widetilde{M}_b(m) \widetilde{M}_c(m)] \big|} \\
        \beta_b &:= \sqrt{\big|\hat{\mathbb{E}}[\widetilde{M}_a(m) \widetilde{M}_b(m)] \hat{\mathbb{E}}[\widetilde{M}_b(m) \widetilde{M}_c(m)] / \hat{\mathbb{E}}[\widetilde{M}_a(m) \widetilde{M}_c(m)] \big|} \\
        \beta_c &:= \sqrt{\big|\hat{\mathbb{E}}[\widetilde{M}_a(m) \widetilde{M}_c(m)] \hat{\mathbb{E}}[\widetilde{M}_b(m) \widetilde{M}_c(m)] / \hat{\mathbb{E}}[\widetilde{M}_a(m) \widetilde{M}_b(m)] \big|},
    \end{align}
 where $\hat{\mathbb{E}}$ is an empirical estimate of the expectation over $\D$, and $\widetilde{M}(m)$ indicates $M(m)$ scaled to $\{-1, 1\}$.
\ENDFOR
\STATE Set accuracy parameter $\alpha_i = \hat{p}(M_i(m) = 1 | y = 1) = \hat{p}(M_i(m) = 0 | y = 0) = \hat{p}(M_i(m) = y) = \frac{\beta_i + 1}{2}$.
\STATE Compute estimate $\hat{p}(y = 1 | \mathbf{M}(m)) \propto \prod_{i = 1}^m \hat{p}(M_i(m) | y = 1) p(y = 1)$.
\STATE Construct $\hat{y} \sim \hat{p}(y = 1 | \mathbf{M}(m))$ for each $(x, m) \in \D$.
\STATE \textbf{Output:} Classifier $\hat{f}$ for anomaly detection trained on $\{(x_i, m_i, \hat{y}_i)\}_{i = 1}^n$.
\end{algorithmic}
\label{alg:multicwola}
\end{algorithm}

\subsection{Theoretical Results}

Under~\eqref{eq:pgm}, \multicwola offers \emph{finite-sample generalization} guarantees.
Suppose the downstream model $\hat{f}$ trained on $\hat{y}$ belongs to class $\mathcal{F}$. Define a loss function $\ell_C:\Y \times \Y \ra \R$ and let the expected loss of $f$ be $L_C(f) := \E{}{\ell_C(f(x), y)}$ on true labels. Then, the optimal classifier is $f^\star = \argmin{f \in \F}{L_C(f)}$, which is achieved with unlimited labeled data. Let the empirical loss of $f$ on $\hat{y}$ be $\hat{L}_C(f) := \frac{1}{n} \sum_{i = 1}^n \ell_C(f(x_i), \hat{y}_i)$. Then, the $\hat{f}$ we learn is constructed from $\hat{f} = \argmin{f \in \F}{\hat{L}_C(f)}$, which is learned on finite and noisily labeled data. Note that this construction is different from the standard empirical risk minimization (ERM) loss on labeled data, and thus $\hat{L}_C(f)$ does not asymptotically equal $L_C(f)$. We aim to minimize the \textit{generalization error} $L_C(\hat{f}) - L_C(f^\star)$.

We now present our result on an upper bound for $L_C(\hat{f}) - L_C(f^\star)$.
Define the Rademacher complexity of $\F$ as $\mathfrak{R}_n(\ell \circ \F) = \E{}{\sup_{f \in \mathcal{F}} \frac{1}{n} \sum_{i=1}^n \varepsilon_i \ell(f(x_i), y_i)}$ with random variables $\Pr(\varepsilon = 1) = \Pr(\varepsilon = -1) = \frac{1}{2}$. Define $e_{\min}$ as the minimum eigenvalue of the covariance matrix on $[y, M_1(m), \dots, M_k(m)]$, and let $a_{\min}$ be the minimum value of $\mathbb{E}[\widetilde{M}_i(m) y]$ over all $i$.
\begin{theorem}
Assume that $p(y, \mathbf{M}(m))$ can be parametrized according to~\eqref{eq:pgm} and that $\ell$ is scaled to be bounded in $[0, 1]$. Assume that the class balance $p(y)$ is known (if not, there are ways to estimate it~\cite{ratner2019training}), and that $k \ge 3$. Then, with probability at least $1 - \delta$, the generalization error of \multicwola on $\D$ is at most
\begin{align*}
    L_C(\hat{f}) - L_C(f^\star) &\le 4 \mathfrak{R}_n(\ell \circ \F) + \sqrt{\frac{\log 2/\delta}{2n}} + \frac{c_1}{e_{\min} a_{\min}^5} \Big(\sqrt{\frac{k}{n}} + \frac{c_2 k}{\sqrt{n}} \Big), 
\end{align*}

where $c_1, c_2$ are positive constants. 
\label{thm:multicwola}
\end{theorem}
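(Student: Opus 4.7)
The plan is to follow the standard weak-supervision-style decomposition used in Ratner et al. and Fu et al., splitting the excess risk into (i) a uniform convergence term for learning with the synthetic labels, and (ii) a bias term coming from errors in estimating the graphical-model parameters $\theta$. Because the theorem's three summands mirror exactly this decomposition, each can be attacked with a separate, well-established tool.

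First, I would introduce an intermediate ``population'' risk on ideal weak labels, $\widetilde{L}_C(f) := \mathbb{E}[\ell_C(f(x), \hat{y})]$ where $\hat{y}$ is drawn from the true $p(y=1\mid \mathbf{M}(m))$ rather than its empirical estimate, and the minimizer $\tilde{f}^\star = \arg\min_{f\in\F}\widetilde{L}_C(f)$. Using the optimality of $\hat{f}$ under $\hat{L}_C$, I can chain
\begin{align*}
L_C(\hat{f}) - L_C(f^\star)
&\le \bigl[L_C(\hat{f}) - \widetilde{L}_C(\hat{f})\bigr] + \bigl[\widetilde{L}_C(\hat{f}) - \hat{L}_C(\hat{f})\bigr] \\
&\quad + \bigl[\hat{L}_C(f^\star) - \widetilde{L}_C(f^\star)\bigr] + \bigl[\widetilde{L}_C(f^\star) - L_C(f^\star)\bigr],
\end{align*}
noting that $\hat{L}_C(\hat{f})-\hat{L}_C(f^\star) \le 0$ is absorbed on the right. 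The second and third brackets are the standard ``empirical vs.\ population'' gap on i.i.d.\ samples drawn from a known distribution (on $(x,\hat{y})$ pairs given the estimated parameters). A symmetrization-plus-McDiarmid argument then yields, with probability $1-\delta$, a bound of $2\mathfrak{R}_n(\ell\circ\F) + \sqrt{\log(2/\delta)/(2n)}$ per term, producing the $4\mathfrak{R}_n(\ell\circ\F) + \sqrt{\log(2/\delta)/(2n)}$ contribution (after absorbing the union-bound constants into the logarithm).

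Second, the first and fourth brackets capture the bias induced by using $\hat{\alpha}_i$ in place of $\alpha_i$ during inference. Here I would invoke the triplet estimator's finite-sample guarantee from Fu et al.\ (the very result whose algorithm is adapted in Algorithm~\ref{alg:multicwola}): under the graphical model \eqref{eq:pgm} and $k\ge 3$, the empirical second moments $\hat{\mathbb{E}}[\widetilde{M}_a\widetilde{M}_b]$ concentrate at rate $1/\sqrt{n}$ with sub-Gaussian tails, and the nonlinear solve for $\mathbb{E}[\widetilde{y}\widetilde{M}_i]$ is Lipschitz with constant controlled by $1/(e_{\min}a_{\min}^5)$ (this is where $e_{\min}$ enters, via invertibility of the covariance structure, and $a_{\min}$ enters via the denominator $\mathbb{E}[\widetilde{M}_b\widetilde{M}_c]$ being bounded away from zero). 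A union bound over the $O(k)$ accuracies yields a high-probability estimation error $\|\hat{\alpha}-\alpha\|_\infty = O\!\bigl((e_{\min}a_{\min}^5)^{-1}\sqrt{\log k / n}\bigr)$, and propagating through the Bayes aggregation formula gives a label-distribution error of order $(\sqrt{k/n} + k/\sqrt{n})/(e_{\min}a_{\min}^5)$.

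Finally, since $\ell$ is bounded in $[0,1]$, the pointwise deviation between $\widetilde{L}_C$ and $L_C$ is controlled by the total-variation distance between the estimated and true label distributions conditional on $\mathbf{M}(m)$; this TV distance inherits the parameter-estimation rate from the previous step, producing the $\frac{c_1}{e_{\min}a_{\min}^5}(\sqrt{k/n} + c_2 k/\sqrt{n})$ term. Summing the three contributions gives the stated bound. The main obstacle I expect is Step 2/3: carefully tracking how perturbations in $\hat{\alpha}_i$ propagate through the nonlinear triplet square-roots and then through the Bayes inference formula without losing additional polynomial factors in $a_{\min}^{-1}$, and verifying that the resulting sensitivity constant cleanly matches the exponent $a_{\min}^{-5}$ appearing in the theorem statement.
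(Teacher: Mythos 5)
Your proposal is a correct outline, but it takes a substantially more self-contained route than the paper. The paper's proof is two lines of machinery: it invokes Theorem~3 of Fu et al.~\cite{Fu20} as a black box, which directly bounds $L_C(\hat{f}) - L_C(f^\star)$ by the ordinary labeled-data ERM gap $L_C(\bar{f}) - L_C(f^\star)$ plus the entire weak-supervision penalty $\frac{c_1}{e_{\min}a_{\min}^5}\big(\sqrt{k/n} + c_2 k/\sqrt{n}\big)$, and then bounds the ERM gap by $2\sup_{f\in\F}|L_C(f)-\hat{L}_C(f)|$ and standard uniform convergence (Mohri et al., Theorem~3.3) to get the $4\mathfrak{R}_n(\ell\circ\F)$ and $\sqrt{\log(2/\delta)/(2n)}$ terms. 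What you propose is essentially to re-derive the content of that cited theorem: your concentration of the empirical second moments, the Lipschitz analysis of the triplet solve in terms of $e_{\min}$ and $a_{\min}$, and the propagation through the Bayes aggregation formula are exactly the internals of Fu et al.'s result, and the ``main obstacle'' you flag (matching the $a_{\min}^{-5}$ sensitivity constant) is precisely the part the paper imports wholesale rather than proving. Two bookkeeping caveats if you were to carry this out: (i) your brackets are mislabeled --- with $\widetilde{L}_C$ defined via the \emph{true} posterior and $\hat{L}_C$ via labels from the \emph{estimated} posterior, the parameter-estimation error lives in your second and third brackets (mixed with sampling error), not in the first and fourth; you need one more intermediate quantity (e.g., the empirical loss under ideal weak labels) to cleanly separate sampling error from estimation bias. (ii) Your first and fourth brackets, $L_C - \widetilde{L}_C$, compare true labels to ideal weak labels and do not vanish for free; they require $y \perp x \mid \mathbf{M}(m)$ (or a noise-aware-loss argument), an assumption your sketch does not state. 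Neither issue is fatal, but both are silently handled inside the theorem the paper cites, which is what makes the paper's proof so short.
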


\noindent We observe that there are three quantities controlling the above bound:
\begin{itemize}[leftmargin=0.5cm]
    \item The \textit{Rademacher complexity} of $\F$: this term describes the model's expressivity. Smaller Rademacher complexity means that the model is easier to learn and that our $\hat{f}$ will be closer to the best model in $\F$. This quantity can be readily computed for a variety of function classes $\F$, such as decision trees, linear models, and two-layer feedforward networks, which makes our bound in Theorem~\ref{thm:multicwola} tractable. See Appendix~\ref{supp:rademacher} for exact values.
    \item Using $n$ finite samples: as the amount of data increases, the error decreases in $\mathcal{O}(n^{-1/2})$.
    \item Using noisy labels $\hat{y}$ instead of $y$: for our weak supervision algorithm and graphical model, using $\hat{y}$ rather than $y$ contributes an additional $\mathcal{O}(n^{-1/2})$ error. Asymptotically, our approach thus does no worse than training with labeled data.
\end{itemize}

By contrast, the standard \cwola approach with $k = 1$ does not utilize any aggregation or weak supervision, which requires $k \ge 3$. For standard \cwola, the second term in the generalization error is irreducible due to the fact that using any single resonant feature in place of $y$ is biased. On the other hand, \multicwola corrects for some of this bias; the second term asymptotically approaches $0$ with more data. 

\subsection{Empirical Results}
In Figure~\ref{fig:snorkelcwola}, we compare \multicwola with standard \cwola as well as two other baselines. We use simulation data from the LHC Olympics Dataset~\citep{Kasieczka:2021xcg}; in particular from Pythia 8~\citep{Sjostrand:2007gs}, where the signal is boson decay and the background is generic $2 \rightarrow 2$ parton scattering. This dataset contains $5$ features; in the standard \cwola setup, we use one thresholded resonant feature  ($k = 1$) and use $4$ discriminative features as $x$. For \multicwola, we have generated $k = 3$ mixtures by varying how the $3$ resonant features (the jet masses in addition to the dijet mass) are thresholded and use $2$ discriminative features as $x$.  
We have three other baselines that utilize $3$ resonant features: 
\begin{itemize}
    \item \cwola + intersect defines the signal region as the intersection of the resonant features' signal regions, e.g. $SR = SR_1 \cap SR_2 \cap SR_3$, but this can be overly conservative.
    \item \cwola $+ x$ thresholding has one resonant feature as the noisy label $\hat{y} = M_1(m)$, and includes the remaining thresholded features as discriminative $\{M_2(m), M_3(m), x\}$.
    \item \cwola + average runs standard \cwola three times, once per resonant feature and with the $2$ discriminative features. The three model scores are averaged to produce the final output.
\end{itemize}

We vary the number of samples available on a logarithmic scale from $n = 59$ to $6003$ and plot the AUC averaged over $5$ runs per sample size in~\ref{fig:snorkelcwola}. We find that \multicwola offers a higher AUC and lower variance, especially when there is limited data. We also plot the SI curves averaged over $5$ runs for $n = 59, 530, 6003$ in~\ref{fig:cwolasic}.

\begin{figure}
    \centering
    \includegraphics[width=0.8\textwidth]{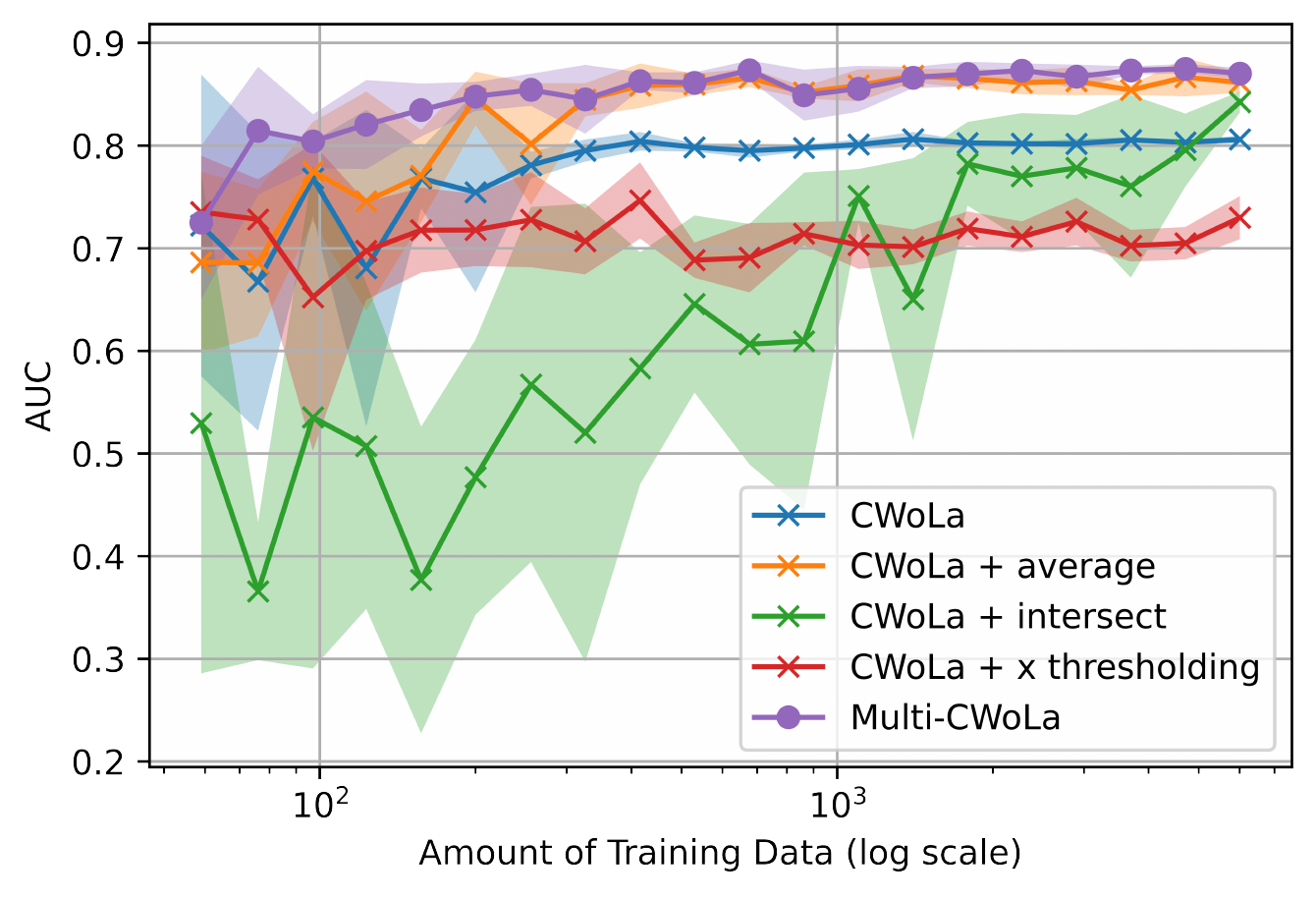}
    \caption{Comparison between \cwola and \multicwola. Using multiple mixed samples helps performance across a range of dataset sizes. Access to multiple weak sources enables better AUC and lower variance compared to the single-feature version.}
    \label{fig:snorkelcwola}
\end{figure}

\begin{figure}
    \centering
    \includegraphics[width=0.45\textwidth]{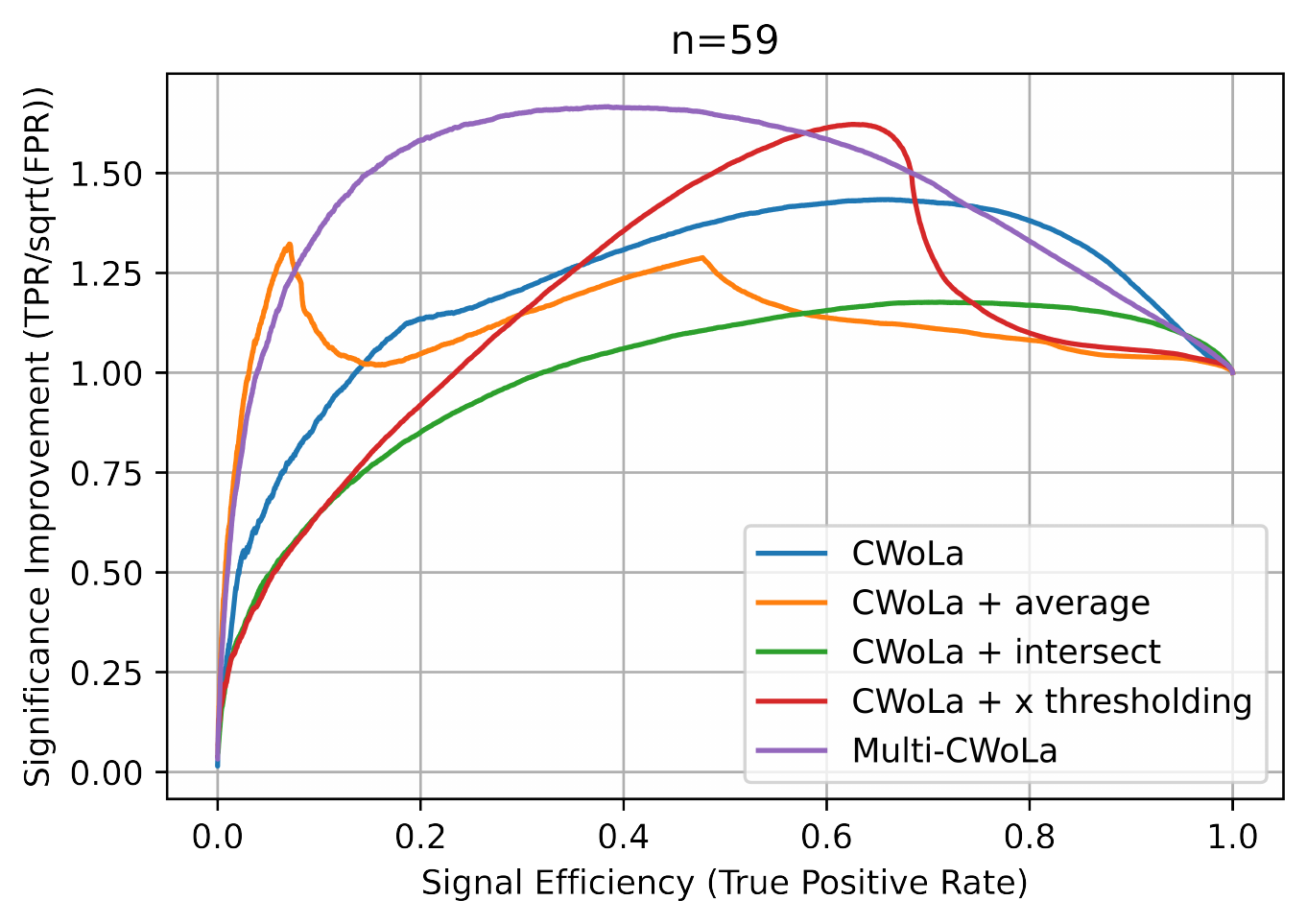}
    \includegraphics[width=0.45\textwidth]{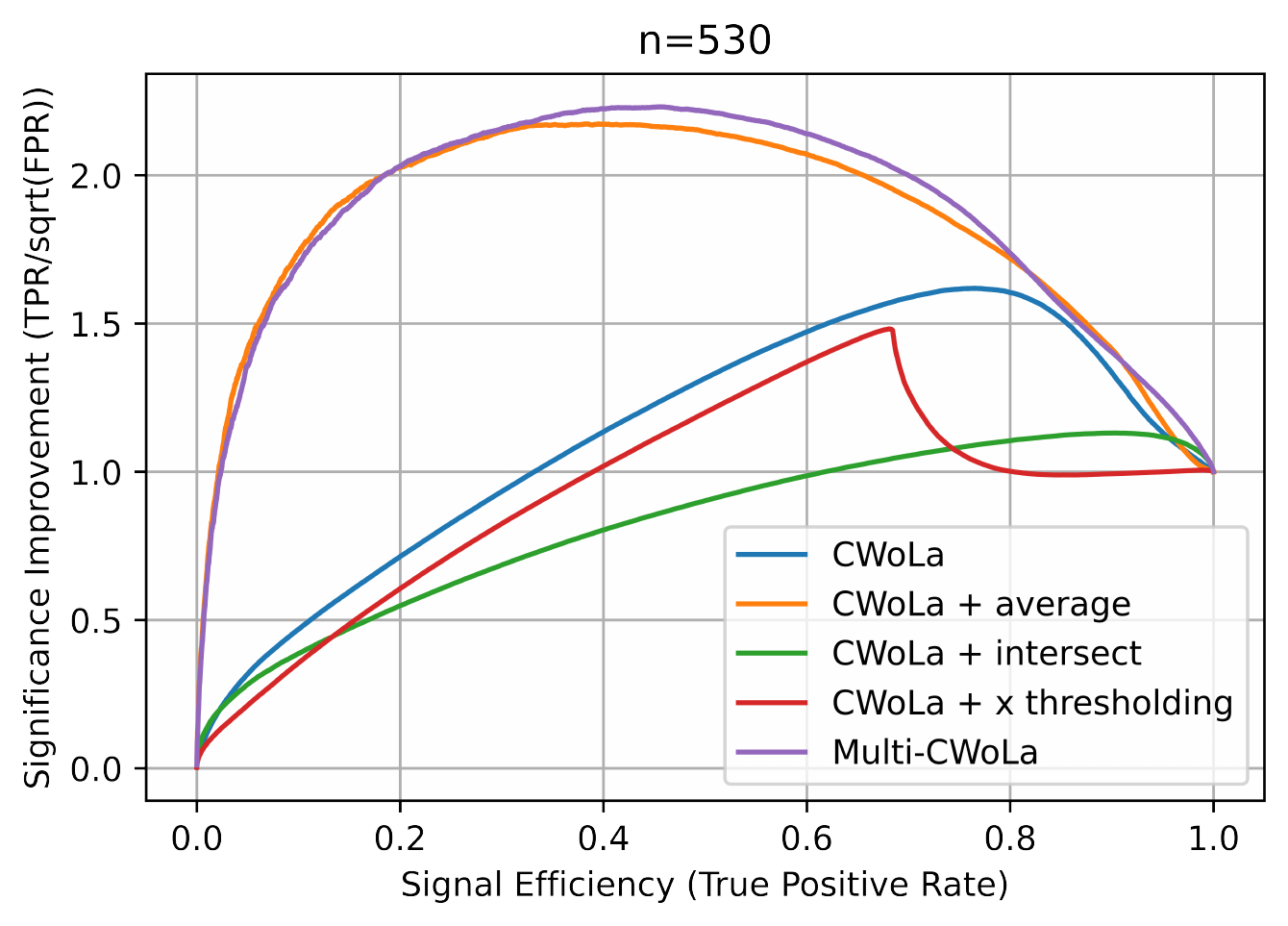}
    \includegraphics[width=0.45\textwidth]{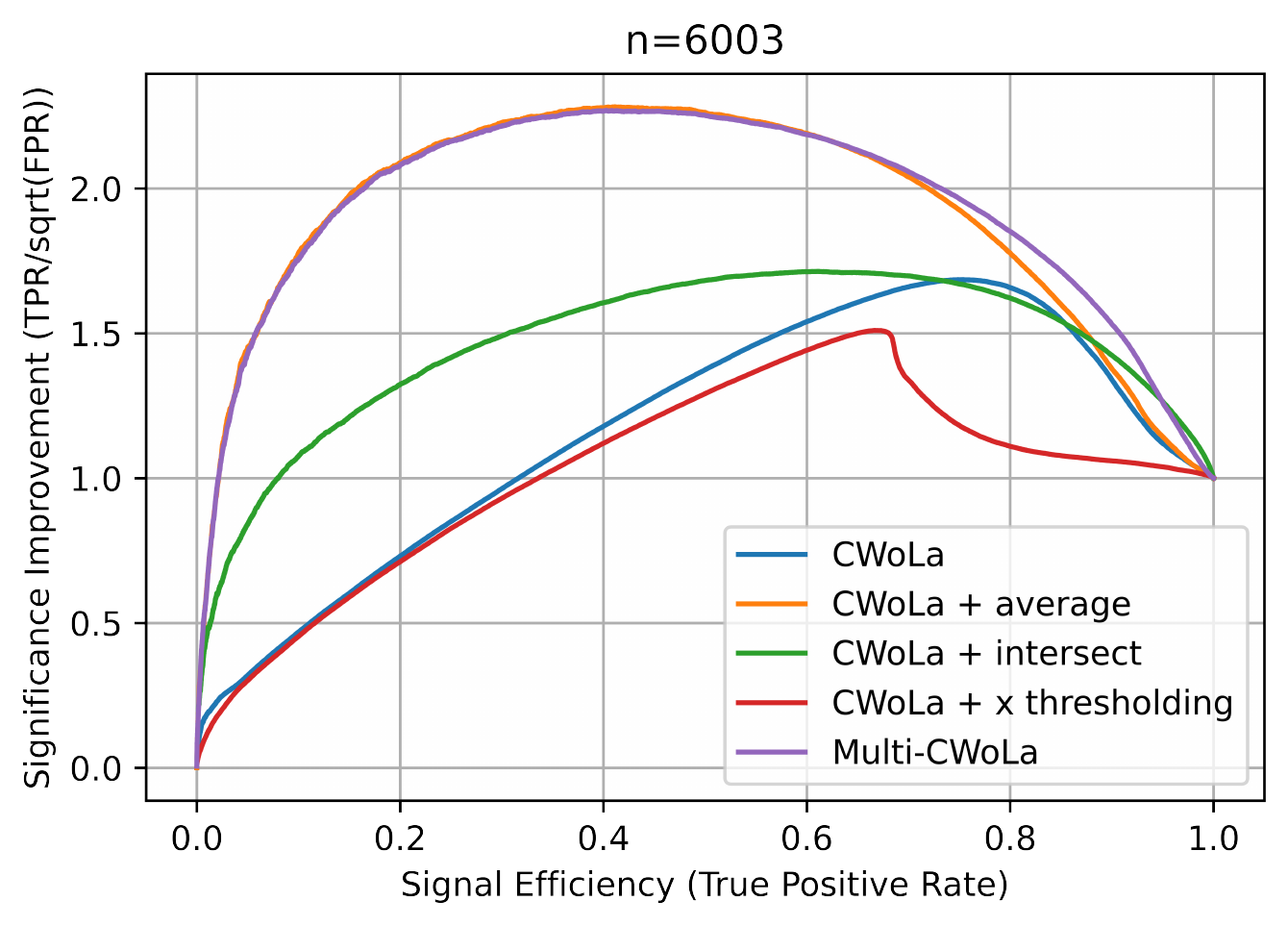}
    \caption{Significance Improvement (SI) curve for \multicwola at sizes $n = 59, 530,$ and $6003$.}
    \label{fig:cwolasic}
\end{figure}

\section{\multisalad: Learning from Multiple Simulations} \label{sec:salad}

We often have access to a(n approximate) simulation of the background process. We first provide an overview of \salad, which reweighs samples from the simulation to better assist with classification on the real dataset. Then, we present \multisalad, a variant of SALAD that uses multiple simulations.

\textbf{Standard \salad} We have a background simulation dataset $\dsim = \{(x_i, m_i)\}_{i = 1}^{\Nsim}$ with $y_i = 0$ for all $i$ in addition to one true dataset $\D = \{(x_i, m_i)\}_{i = 1}^{n}$. $\dsim$ is drawn from some distribution $\Psim$ with density $\psim$.
While CWoLA learns the likelihood ratio between the signal and sideband regions of $\D$ alone, \salad utilizes $\dsim$ as well. Note that if $\psim$ is equal to $p(\cdot | y = 0)$, we could directly train a model to distinguish between $\D$ and $\dsim$ in the signal region to get a classifier that could detect anomalies. However, since $\dsim$ may not match the true background data, we instead first need to learn a reweighting function that captures the differences between $\dsim$ and $\D$'s background data, and then we train a model to distinguish between $\D$ and the reweighted $\dsim$ in the signal region.
Formally, given fixed $SR$ and $SB$ for both datasets, the method can be broken into two steps:
\begin{enumerate}
    \item \textbf{Reweighting}: a classifier $\hat{g}$ is trained to distinguish between $\dsim_{SB} = \dsim \cap SB$ and $\D_{SB}$. Assuming that the sideband region has no anomalies, this $\hat{g}$ is able to produce an estimate of the weight ratio\footnote{This is with the binary cross entropy loss function (also works for other functions~\cite{Nachman:2021yvi}).  This likelihood-ratio trick is well-known (see e.g. Ref.~\cite{hastie01statisticallearning,sugiyama_suzuki_kanamori_2012}), also in high-energy physics (see e.g. Ref.~\cite{Cranmer:2015bka}).} $w(x, m) = \frac{p(x, m | y = 0)}{\psim(x, m | y = 0)} \approx \frac{\hat{g}(x, m)}{1 - \hat{g}(x, m)}$, assuming that the datasets are the same size ($|\dsim_{SB}| = |\D_{SB}|$).
    \item \textbf{Detection}: Using a loss function $L_S$ with estimated $\hat{w}(x, m)$ applied to $\dsim_{SR} = \dsim \cap SR$, a classifier $\hat{h}$ is trained to distinguish between $\D_{SR}$ and $\dsim_{SR}$.
\end{enumerate}

If the estimate $\hat{w}(x, m)$ is exactly equal to $w(x, m)$ (e.g. $\hat{g}$ is Bayes-optimal), then the second step will be equivalent in expectation to learning the ratio $\frac{p(x)}{p(x | y = 0)}$ (see Lemma~\ref{lemma:salad_loss} in Appendix~\ref{supp:salad_loss}), from which one can detect anomalies.

\subsection{\multisalad Method}
Now, we have multiple simulation datasets $\dsim_1, \dots, \dsim_k$. 
One approach would be to maintain distinctions among simulations by reweighing each pair to learn $k$ weight functions $w_i(x, m)$, and then using one overall loss function that weights points from each $\dsim_{SR, i}$ with $w_i$.
However, it has been shown that importance reweighting, despite working in expectation, can be highly unstable and result in poor performance of tasks on the target data $\D$~\citep{dasgupta2003boosting}. To understand why, Ref.~\cite{cortes2010learning} showed that the generalization error of an empirical loss function with importance weights $w$ depends on the magnitude of $w$.
Applied to our setting, it suggests that the more inaccurate the simulation is, the less the reweighted loss recovers the true $\frac{p(x)}{p(x | y = 0)}$,
and the model may instead pick up on differences between $\D_{SR}$ and the reweighted $\dsim_{SR}$ that are noise rather than the anomaly. As a result, aggregating individual \salad outputs can be equivalent to ensembling many poor classifiers. 

Given these observations, \multisalad uses multiple simulation datasets in a very simple yet theoretically principled way: control the magnitude of the overall $w$ by combining all the $\dsim_i$ to produce one large simulation dataset $\widetilde{\D}^{\text{sim}}$ whose distribution best approximates the true background $p(x | y = 0)$, and then use standard \salad with $\widetilde{\D}^{\text{sim}}$ and $\D$. Note that this approach both improves sample complexity and can ``suppress'' a simulation that on its own has high $w$, while the approach of learning $k$ weight functions would not offer such improvements.
In Algorithm~\ref{alg:multisalad} and Appendix~\ref{supp:multisalad}, we write this procedure out where we simply concatenate all $\dsim_i$ together. However, with domain knowledge on the strengths and weaknesses of each simulation across features, one could produce $\widetilde{D}^{\text{sim}}$ by sampling accordingly from each. We leave this direction for future work.

\subsection{Theoretical Results}

We now present a finite sample generalization error bound on \multisalad that also applies to \salad. To measure the generalization error, recall $w(x, m) = \frac{p(x, m | y = 0)}{\psim (x, m | y = 0)}$ and let $\hat{w}$ be the classifier $g$'s estimate.
We denote $h$ as the reweighted classifier. Let $h^\star = \argmin{h \in \h}{L_S(h, w)}$ and let $\hat{h} = \argmin{h \in \h}{\hat{L}_S(h, \hat{w})}$. We aim to bound $L_S(\hat{h}, \hat{w}) - L_S(h^\star, w)$.

We first set up some definitions. Define $n^{SR}$ as the number of points from $\D$ and $\widetilde{\D}^{\text{sim}}$ belonging to the signal region, and $n^{SB}$ as the number of points belonging to the sideband. Let $\Nsim^{SR}$ be the number of points in $\widetilde{\D}^{\text{sim}}$ belonging to the signal region. Let $\hat{g}(x) \in [\hat{g}_{\min}, \hat{g}_{\max}]$ and $g^\star(x) \in [g^\star_{\min}, g^\star_{\max}]$, where $g^\star$ is the optimal classifier. Let $\mathfrak{R}_{n^{SR}}(\ell_S \circ \{H, G\})$ be the Rademacher complexity of the overall loss $L_S(h, w)$ across function classes $h \in \h, g \in \G$. Define $W = \max_{x, m} w(x, m)$ as the maximum ratio between the simulation and true background. Let $B_1 = \max \{-\log h^\star(x, m), -\log(1 - h^\star(x, m))\}$ be based on the most extreme value of $h^\star$ (i.e. how far apart $p$ and $p(\cdot | y = 0)$ can be). Let $\eta = \max (-\log (1 - h^\star(x, m)))$ for $x, m \in \dsim_{SR}$. Let $\mathfrak{R}_{n^{SB}}(\ell \circ \G)$ is the Rademacher complexity of the loss function class used for learning the reweighting, where $\ell$ is point-wise cross-entropy. Finally, let $B_2 = -\log (\min \{\hat{g}_{\min}, g^\star_{\min}\})$.

\begin{restatable}[]{theorem}{multisaladthm}
With probability at least $1 - \delta$, there exists a constant $c > 0$ such that the generalization error of \multisalad on $\widetilde{D}^{\text{sim}}$ and $\D$ is at most
\begin{align}
    L_S(\hat{h}, \hat{w}) &- L_S(h^\star, w) \le 2\mathfrak{R}_{n^{SR}}(\ell_S \circ \{ \mathcal{H}, \mathcal{G}\}) + (1 + WB_1) \sqrt{\frac{\log 8 / \delta}{2n^{SR}}} \\
    &+ \frac{\eta \Nsim^{SR}}{(1 - \hat{g}_{\max})(1 - g^\star_{\max}) n^{SR}} \bigg(4c \mathfrak{R}_{n^{SB}}(\ell \circ \mathcal{G}) + 2c  \sqrt{\frac{\log 4 /\delta }{2 n^{SB}}}+ B_2 \sqrt{\frac{\log 8 / \delta}{2\Nsim^{SR}}} \bigg). \nonumber 
\end{align}
\label{thm:multisalad}
\end{restatable}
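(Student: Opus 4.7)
The plan is to decompose the generalization error into two parts: an ERM/uniform-convergence piece controlling the detection classifier at fixed weights, plus a perturbation piece capturing the cost of plugging in the estimated weights $\hat w$ in place of the true $w$. The second piece in turn reduces to the reweighting step's own generalization bound.

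\textbf{Step 1: Decomposition.} I would insert and subtract terms to obtain
\begin{align*}
L_S(\hat h,\hat w) - L_S(h^\star,w)
&= \underbrace{\big[L_S(\hat h,\hat w) - \hat L_S(\hat h,\hat w)\big]}_{T_1}
 + \underbrace{\big[\hat L_S(\hat h,\hat w) - \hat L_S(h^\star,\hat w)\big]}_{\le 0} \\
&\quad + \underbrace{\big[\hat L_S(h^\star,\hat w) - L_S(h^\star,\hat w)\big]}_{T_2}
 + \underbrace{\big[L_S(h^\star,\hat w) - L_S(h^\star,w)\big]}_{T_3}.
\end{align*}
The middle bracket is nonpositive by definition of $\hat h$ as the empirical minimizer of $\hat L_S(\cdot,\hat w)$.

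\textbf{Step 2: Control $T_1+T_2$.} Since $\hat w$ depends on the sideband data but $T_1,T_2$ are concentration gaps on the signal-region data, I would invoke uniform convergence of $\hat L_S$ to $L_S$ over the composite class $\ell_S\circ\{\mathcal H,\mathcal G\}$ on $n^{SR}$ samples. The reweighted loss is bounded pointwise by $1 + WB_1$ (the data-side term by $B_1$, the reweighted simulation-side term by $WB_1$), so standard symmetrization together with a bounded-differences/Hoeffding inequality yields $T_1+T_2 \le 2\mathfrak R_{n^{SR}}(\ell_S\circ\{\mathcal H,\mathcal G\}) + (1+WB_1)\sqrt{\log(8/\delta)/(2n^{SR})}$.

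\textbf{Step 3: Control $T_3$ by weight perturbation.} Only the simulation-side term in $L_S$ depends on the weights. Writing it as an importance-reweighted expectation against $\psim(\cdot\mid y=0)$ restricted to the signal region, I would bound
\[
T_3 \;\le\; \frac{\Nsim^{SR}}{n^{SR}}\,\E{(x,m)\in\dsim_{SR}}{\bigl|\hat w(x,m)-w(x,m)\bigr|\cdot\bigl(-\log(1-h^\star(x,m))\bigr)} \;\le\; \frac{\eta\,\Nsim^{SR}}{n^{SR}}\,\E{}{|\hat w-w|},
\]
where the $\Nsim^{SR}/n^{SR}$ prefactor arises from converting between the simulation-normalized expectation and the data-normalized empirical loss that $L_S$ uses. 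Next, using $w=g^\star/(1-g^\star)$ and $\hat w=\hat g/(1-\hat g)$, a direct algebraic identity gives $|\hat w-w| = |\hat g-g^\star|/[(1-\hat g)(1-g^\star)] \le |\hat g-g^\star|/[(1-\hat g_{\max})(1-g^\star_{\max})]$, producing the denominator in the stated bound.

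\textbf{Step 4: Control $\mathbb E|\hat g - g^\star|$ via the reweighting step.} The classifier $\hat g$ is the ERM of the binary cross-entropy $\ell$ over $\mathcal G$ on $n^{SB}$ sideband samples where $\psim_{SB}$ and $p_{SB}$ play the role of the two classes. A standard uniform-convergence argument on $\ell\circ\mathcal G$ bounds the excess cross-entropy risk by $2\mathfrak R_{n^{SB}}(\ell\circ\mathcal G) + \sqrt{\log(4/\delta)/(2n^{SB})}$. A smoothness/Pinsker-type inequality (using that $\log$ is locally Lipschitz once $g$ stays away from $0$ and $1$) then converts the cross-entropy excess risk into an $L^1$ bound on $|\hat g-g^\star|$, introducing the absolute constant $c$ and an extra factor of $2$. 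Finally, replacing the population expectation $\E{\dsim_{SR}}{|\hat g-g^\star|}$ by its empirical average on $\Nsim^{SR}$ simulation-region samples contributes a one-sided Hoeffding term $B_2\sqrt{\log(8/\delta)/(2\Nsim^{SR})}$, where the integrand is bounded by $B_2=-\log\min\{\hat g_{\min},g^\star_{\min}\}$. Substituting back into Step~3 yields the large parenthesized factor.

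\textbf{Step 5: Union bound.} Four concentration events are invoked (two for $T_1+T_2$, one for the reweighting ERM, one for the simulation-sample estimate); allocating $\delta/4$ to each produces the $\log(8/\delta)$ and $\log(4/\delta)$ logarithms in the statement.

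\textbf{Main obstacle.} The most delicate step is Step~4: relating the cross-entropy excess risk to $\mathbb E|\hat g-g^\star|$ cleanly. A naive Pinsker bound would cost a square root and degrade the rate; the cleaner route is to exploit the assumed boundedness of $g^\star$ and $\hat g$ away from $\{0,1\}$ so that the cross-entropy upper-bounds a constant multiple of the squared $L^2$ error, after which Cauchy--Schwarz gives the stated linear $L^1$ bound with constant $c$. A secondary subtlety is bookkeeping the $\Nsim^{SR}/n^{SR}$ prefactor in Step~3, which is an artifact of the particular normalization convention \salad uses when forming $\hat L_S$; the rest is routine symmetrization, Hoeffding, and union bounding.
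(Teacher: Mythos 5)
Your overall architecture matches the paper's proof almost exactly: the same four-term decomposition with the empirical-minimizer term dropped, uniform convergence over the composite class $\ell_S \circ \{\mathcal{H},\mathcal{G}\}$ on the $n^{SR}$ signal-region points, a Hoeffding bound of size $WB_1$ for the $h^\star$ term, the identity $|\hat{w}-w| = |\hat{g}-g^\star|/[(1-\hat{g})(1-g^\star)]$ with denominator bounded below by $(1-\hat{g}_{\max})(1-g^\star_{\max})$, reduction to the excess cross-entropy risk of $\hat{g}$ on the $n^{SB}$ sideband points, and a final union bound. Two bookkeeping differences are worth noting: the paper's second concentration term is $\hat{L}_S(h^\star,w)-L_S(h^\star,w)$ at the \emph{true} $w$, which permits a plain Hoeffding with constant $WB_1$, whereas your version at $\hat{w}$ forces a second application of uniform convergence and would double the Rademacher term; and the paper keeps the weight-perturbation term empirical, $\hat{L}_S(h^\star,\hat{w})-\hat{L}_S(h^\star,w)$, so the $B_2\sqrt{\log(8/\delta)/(2\Nsim^{SR})}$ term arises from passing the empirical average over $\dsim_{SR}$ to its expectation, not the reverse direction you describe.

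The genuine gap is your Step 4. You correctly flag the conversion from excess cross-entropy risk to $\mathbb{E}|\hat{g}-g^\star|$ as the delicate point, but the fix you propose does not deliver the claimed bound: strong convexity of the cross-entropy around the Bayes classifier gives $L(\hat{g})-L(g^\star) \gtrsim \|\hat{g}-g^\star\|_2^2$, hence $\|\hat{g}-g^\star\|_1 \le \|\hat{g}-g^\star\|_2 \lesssim \sqrt{L(\hat{g})-L(g^\star)}$. Cauchy--Schwarz only converts $L^1$ to $L^2$; it cannot remove the square root, so your route yields a $(n^{SB})^{-1/4}$ rate rather than the stated $(n^{SB})^{-1/2}$. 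The paper argues in the opposite direction: it establishes the \emph{pointwise} inequality $|\hat{g}(x,m)-g^\star(x,m)| \le |\ell(\hat{g}(x,m),z)-\ell(g^\star(x,m),z)|$ by applying $\log(1+u)\ge u/(1+u)$ separately to the $z=1$ and $z=0$ branches of the per-point loss, sums this over $\dsim_{SR}$, concentrates the empirical average of the loss differences around $\mathbb{E}[|\ell(\hat{g},z)-\ell(g^\star,z)|]$ (which is where $B_2$ enters), and then asserts a constant $c$ with $\mathbb{E}[|\ell(\hat{g},z)-\ell(g^\star,z)|] \le c\,|L(\hat{g})-L(g^\star)|$, the latter bounded by $2\sup_{g}|L(g)-\hat{L}(g)|$ since $g^\star$ is Bayes optimal. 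That absorption into $c$ is itself asserted rather than fully justified in the paper, but it is the mechanism producing the unspecified constant $c$ in the theorem statement; your mechanism produces a constant by a different argument that, as described, loses the rate.
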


We make several observations about this bound:
\begin{itemize}
    \item The bound scales in $(n^{SB})^{-1/2}$ and $(\Nsim^{SR})^{-1/2}$, where the former comes from the initial reweighting step while the latter comes from the weighted classification step.
    \item The bound is also dependent on the Rademacher complexities of both classifiers $g$ and $h$ used. 
    \item The bound depends on the difference between the simulation and data distributions through quantities $W$, $B_1, B_2, \eta, \hat{g}_{\max}, g_{\max}$. If the distributions have very different densities, these quantities will all be large, increasing the generalization error.
\end{itemize}

We comment how this bound is different when instantiated for \salad versus \multisalad. The following example shows how \salad with one simulation can result in a large $W$ (and other large constants), while \multisalad with two simulations combined can reduce $W$ in the bound.

\begin{example}
Let $\Psim^1(x | y = 0) = \N(\mu, \sigma^2)$, $\Psim^2(x | y = 0) = \N(-\mu, \sigma^2)$ be Gaussian distributions on $x$ with $\mu, \sigma^2 \in \R$, and let the true background distribution $\p(\cdot | y = 0)$ be a mixture of the Gaussians on $x$, $\p(x | y = 0) = \frac{1}{2}\Psim^1 + \frac{1}{2} \Psim^2$. Let $\Psim^1, \Psim^2,$ and $\p$ have the same marginal distribution over $m$ with $x \independent m | y$. Then, if we only use one simulation $\Psim^1$,
\begin{align*}
    w(x, m) &= \frac{p(x, m | y = 0)}{\psim^1(x, m | y = 0)} = \frac{p(x | y = 0)}{\psim^1(x | y = 0)} \\
    & = \frac{\frac{1}{2\sigma \sqrt{2\pi}} \exp\Big(-\frac{(x - \mu)^2}{2\sigma^2} \Big) + \frac{1}{2\sigma \sqrt{2\pi}} \exp\Big(-\frac{(x + \mu )^2}{2\sigma^2} \Big)}{\frac{1}{\sigma \sqrt{2\pi}} \exp\Big(-\frac{(x - \mu)^2}{2\sigma^2} \Big)} \\
    &= \frac{1}{2} + \frac{1}{2}\exp \bigg(\frac{(x - \mu)^2}{2\sigma^2} - \frac{(x + \mu)^2}{2\sigma^2} \bigg) = \frac{1}{2} + \frac{1}{2} \exp\bigg(\frac{-2x\mu}{\sigma^2} \bigg).
\end{align*}

Therefore, as $x \rightarrow -\infty$, $W \rightarrow \infty$. 
However, if we define $\Psim$ as the distribution of the two simulation datasets concatenated, we have that $\psim(x | y = 0) = p(x | y = 0)$, and as a result, $W \rightarrow 1$, making the generalization error bound smaller.
\end{example}

From this example, we can see that significantly differing simulation and data distributions can result in large, unbounded weight ratios, which are correlated with poor performance.\footnote{The bound in Theorem~\ref{thm:multisalad} is meant to provide a general understanding of SALAD's performance. It can be made tighter by replacing terms that are maxima like $M$ and $B_2$ with terms that are based on the overall data distributions (e.g. variance, as in Ref.~\cite{cortes2010learning}). Variance-based bounds are less likely to be vacuous, but will still demonstrate how performance is dependent on the intrinsic differences between the two distributions.} This concretely motivates our algorithmic objective to combine multiple simulation datasets as to closely approximate the true data.

\subsection{Empirical Results}

\begin{figure}
    \centering
    \includegraphics[width=0.8\textwidth]{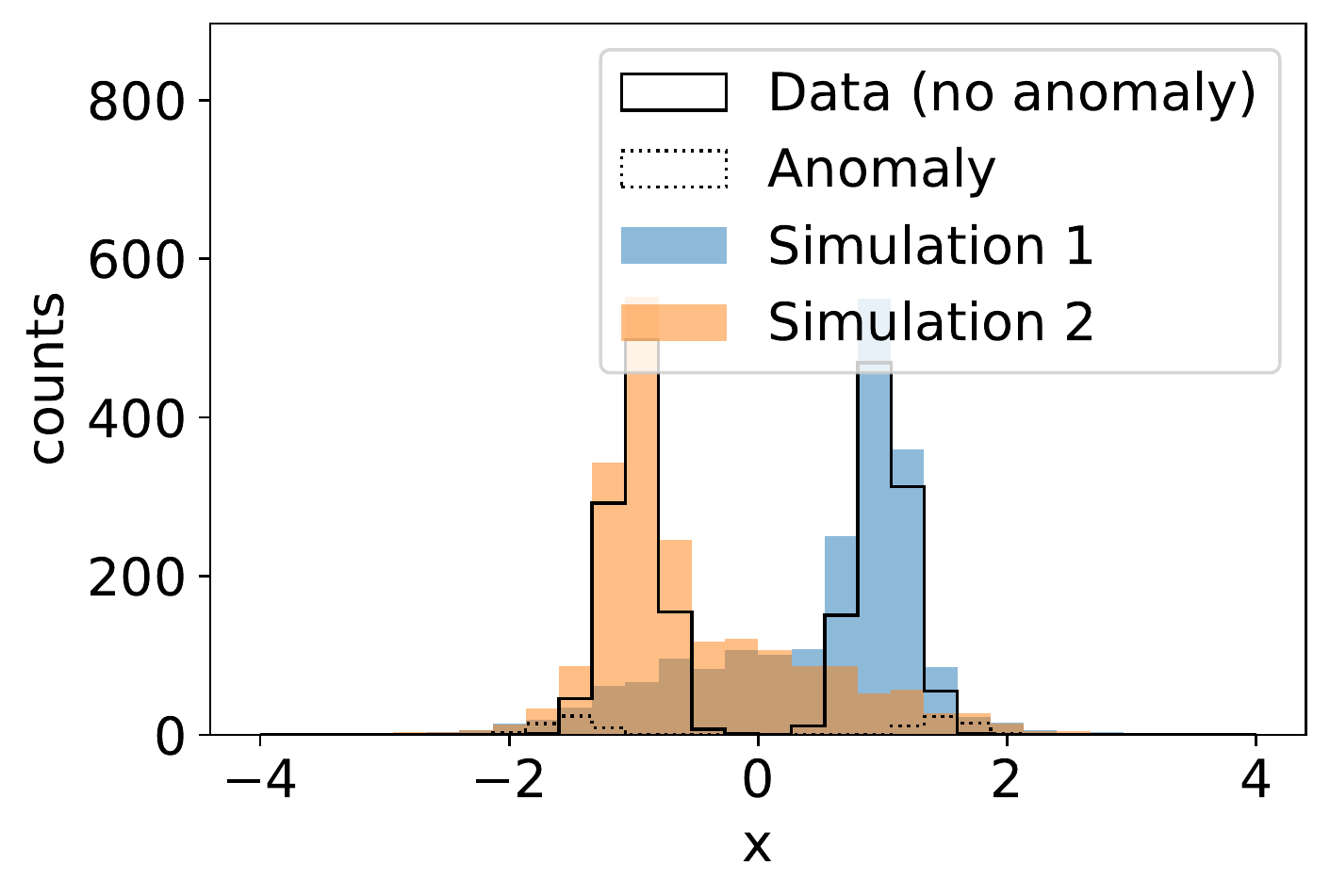}
    \caption{Synthetic data for evaluating Multi-SALAD.}
    \label{fig:salad_data_distr}
\end{figure}

To demonstrate how \multisalad can improve over using only one simulation and over using simulations separately, we consider a synthetic experiment with two simulation datasets\footnote{We find that the differences between the simulations in the LHC Olympics are not enough to see a noticeable gain from \multisalad over \salad.}.
The true background is $\mathcal{P}(\cdot | y = 0) = \frac{1}{2} \N(-1, 0.2) + \frac{1}{2}\N(1, 0.2)$, and the anomaly is $\mathcal{P}(\cdot | y = 1) = \frac{1}{2} \N(-2, 0.2) + \frac{1}{2} \N(2, 0.2)$.
Simulation 1 is $\Psim^{1} = \frac{1}{2} \N(1, 0.2) + \frac{1}{2} \N(0, 1)$, and simulation 2 is $\Psim^{2} = \frac{1}{2}\N(-1, 0.2) + \frac{1}{2}\N(0, 1)$. 
We generate $2000$ points from the true background and $100$ points that are anomalies to form $\D$, and $2000$ points each from $\Psim^1$ and $\Psim^2$ to form $\dsim_1$ and $\dsim_2$.
We construct signal and sideband regions from these by splitting datasets in half randomly, assuming they follow the same distribution over $x$ (i.e., $m$ is independent of $x$) except that there is no anomaly in the sideband regions. 
A visualization is shown in Figure~\ref{fig:salad_data_distr}.

Intuitively, the anomaly is only slightly different from the background data, which makes it important to learn a good reweighting function from the simulations.
Because each simulation alone diverges greatly from the data for one mode, each individual reweighting may not approximate the true $\mathcal{P}(\cdot | y = 1)$ well.
On the other hand, if we combine both simulation datasets together, the aggregate distribution has smaller weights with lower variance, which can allow for more accurate reweighting. This is demonstrated in Figure~\ref{fig:salad_reweigh_sb}, which depicts the reweighting in the sideband region. 
Figure~\ref{fig:salad_reweigh_sr} depicts the reweighting's interpolation into the signal region, where we introduce an additional baseline \switchsalad, which uses $k$ separate weight functions $w_i(x, m)$ and switches among them in the reweighted loss function $L_S$.
In all but the bottom right subfigure in both figures, the reweighted simulation data poorly approximates the true background data. As a result, a classifier trained to distinguish between the high-variance reweighted simulation and the true background data plus some small anomaly will more likely learn the distinctions coming from poor approximation, rather than anomaly. In particular, note that \switchsalad results in significant overweighting in Figure~\ref{fig:salad_reweigh_sr}.

\begin{figure}[t]
    \centering
    \includegraphics[width=0.45\textwidth]{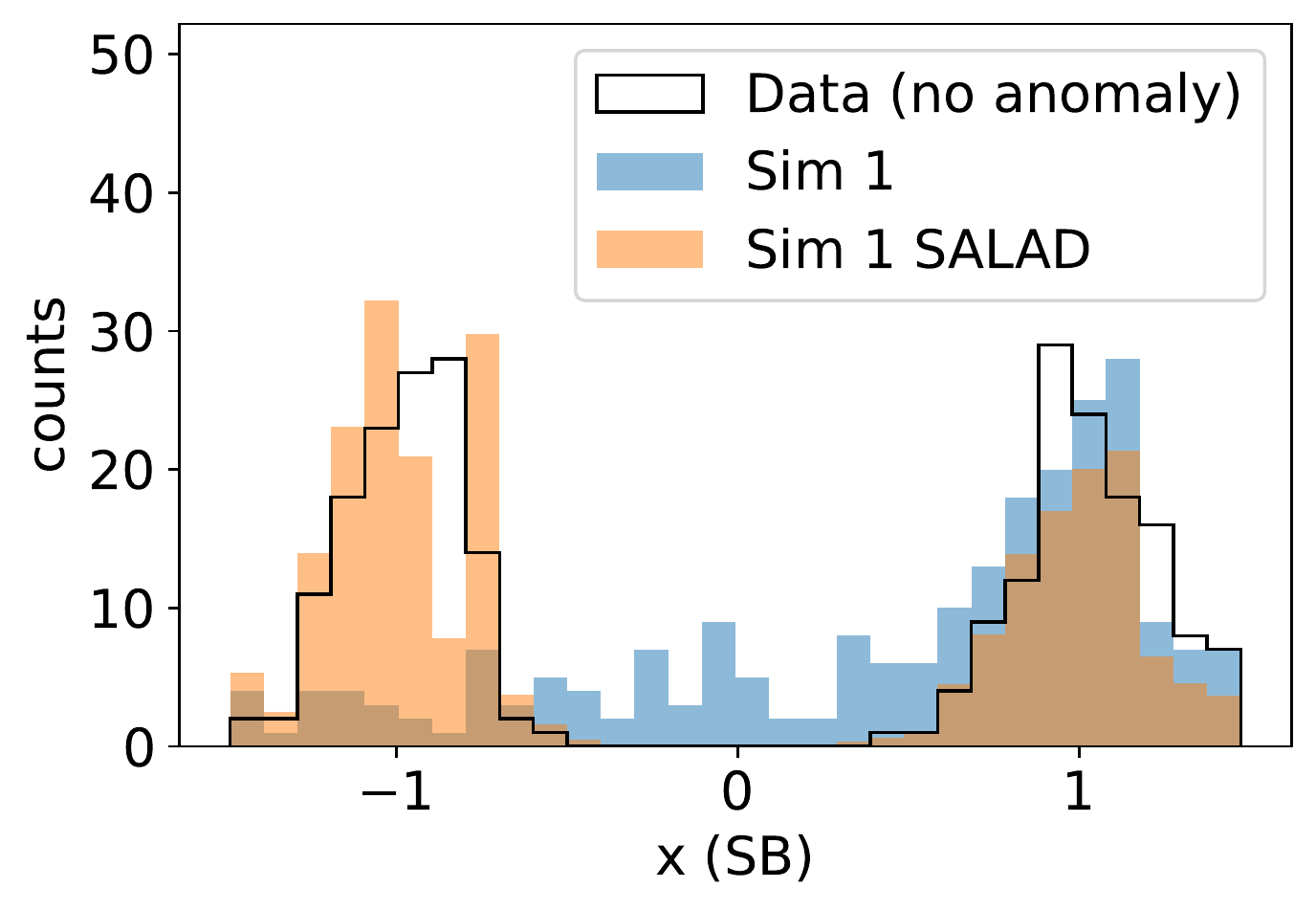}
    \includegraphics[width=0.45\textwidth]{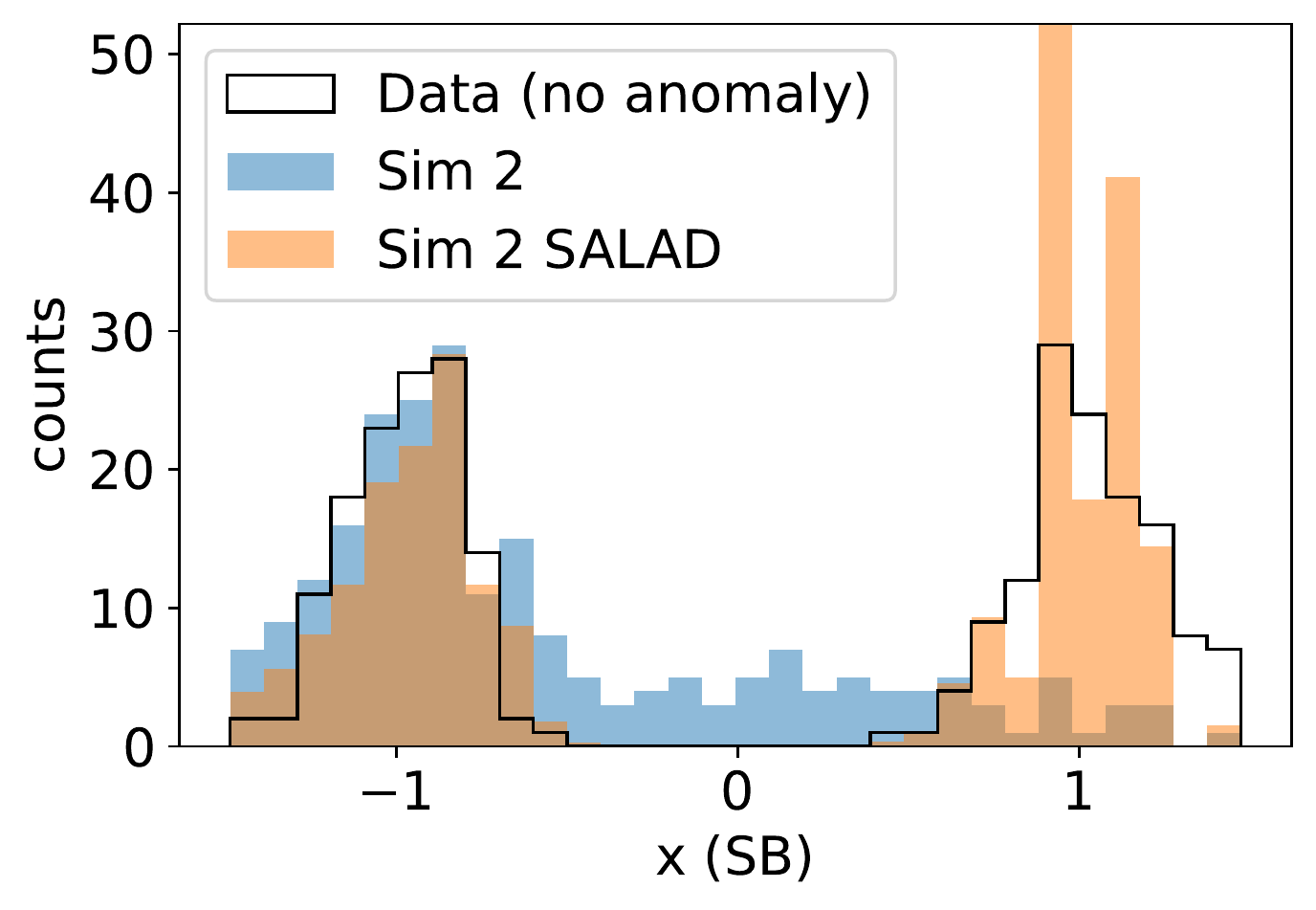}
    \includegraphics[width=0.45\textwidth]{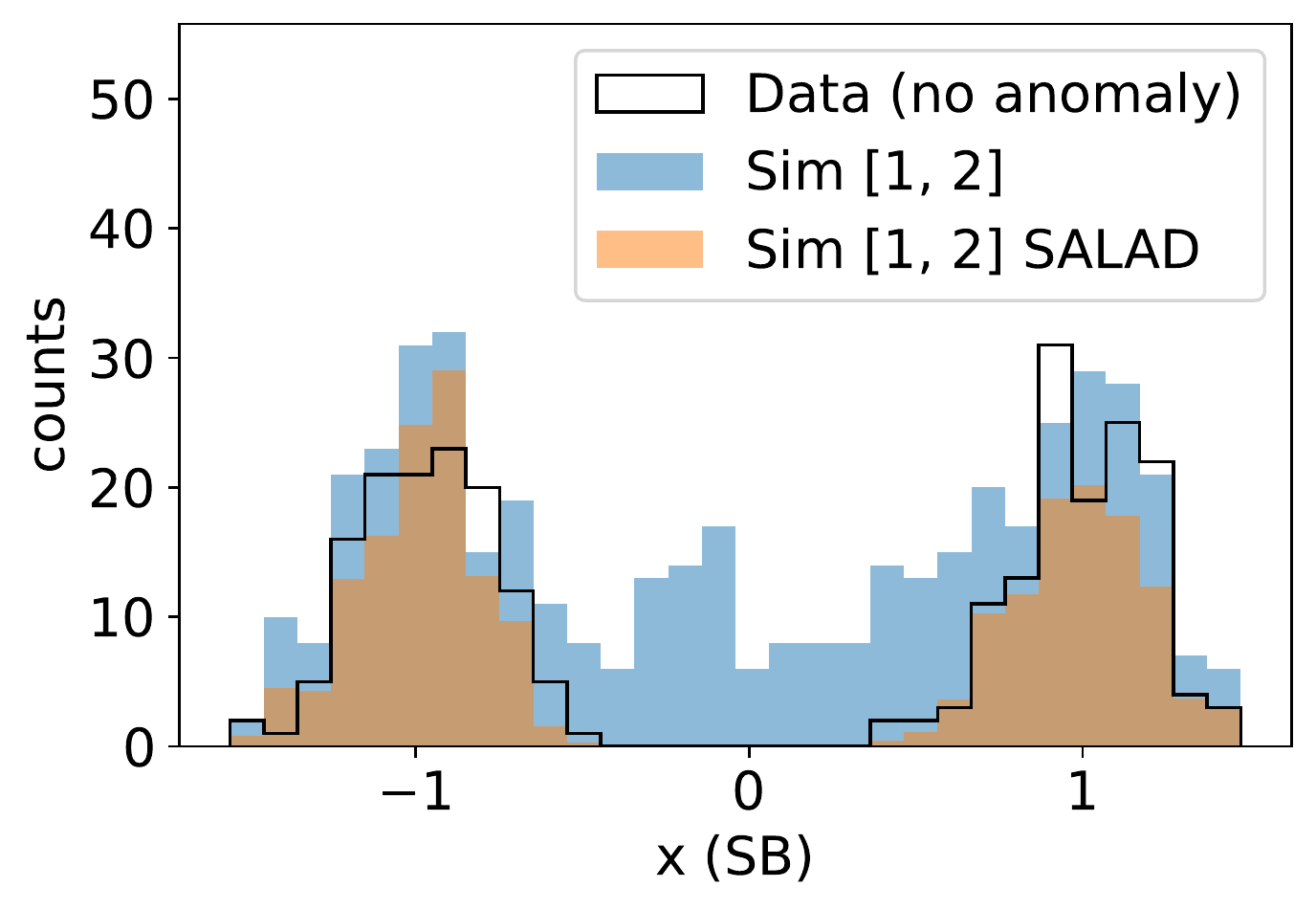}

    \caption{Top left: SALAD reweighting using simulation 1 on sideband region. Top right: reweighting using simulation 2. Bottom: reweighting using simulation 1 and 2 combined.}
    \label{fig:salad_reweigh_sb}
\end{figure}

\begin{figure}[t]
    \centering
    \includegraphics[width=0.45\textwidth]{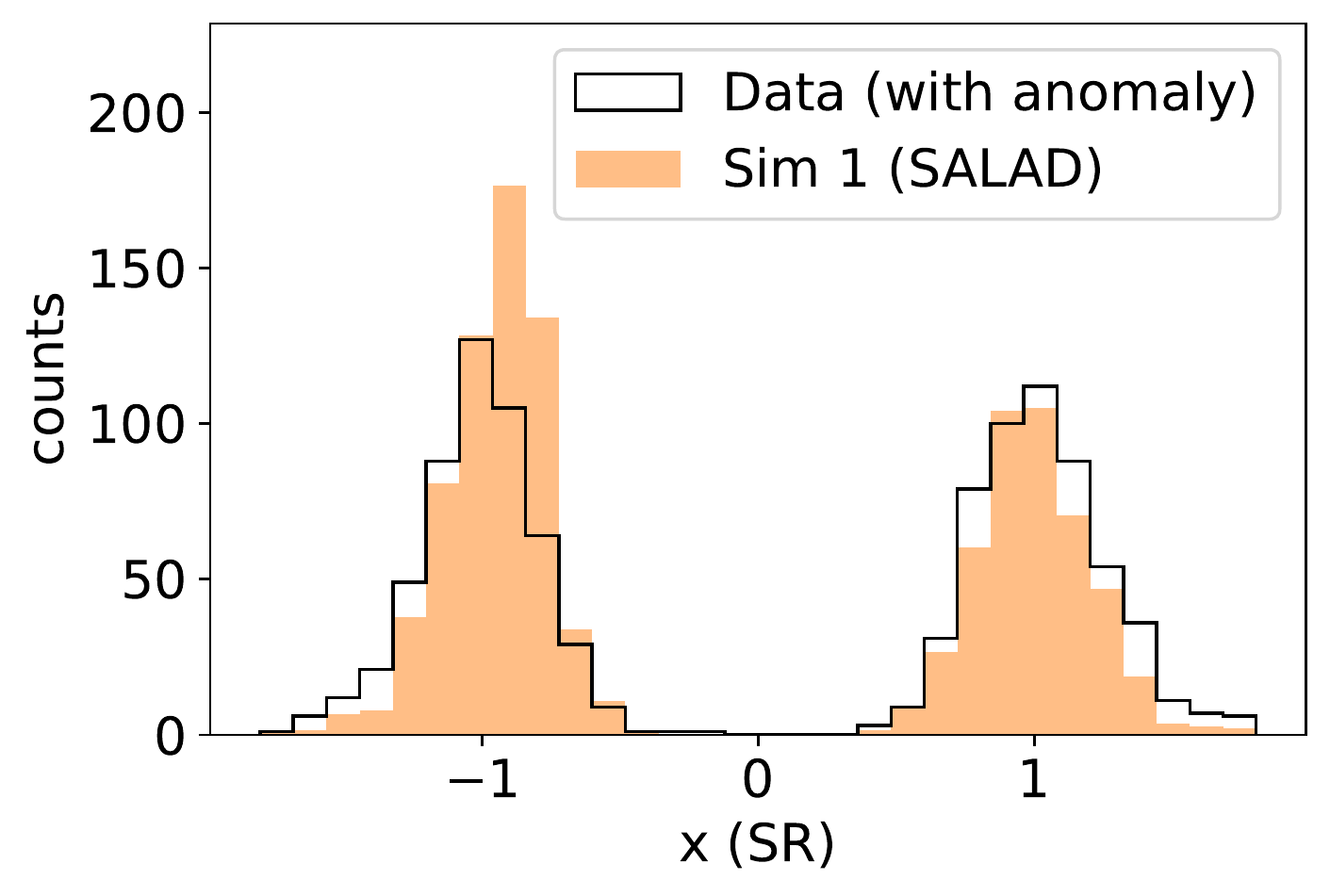}
    \includegraphics[width=0.45\textwidth]{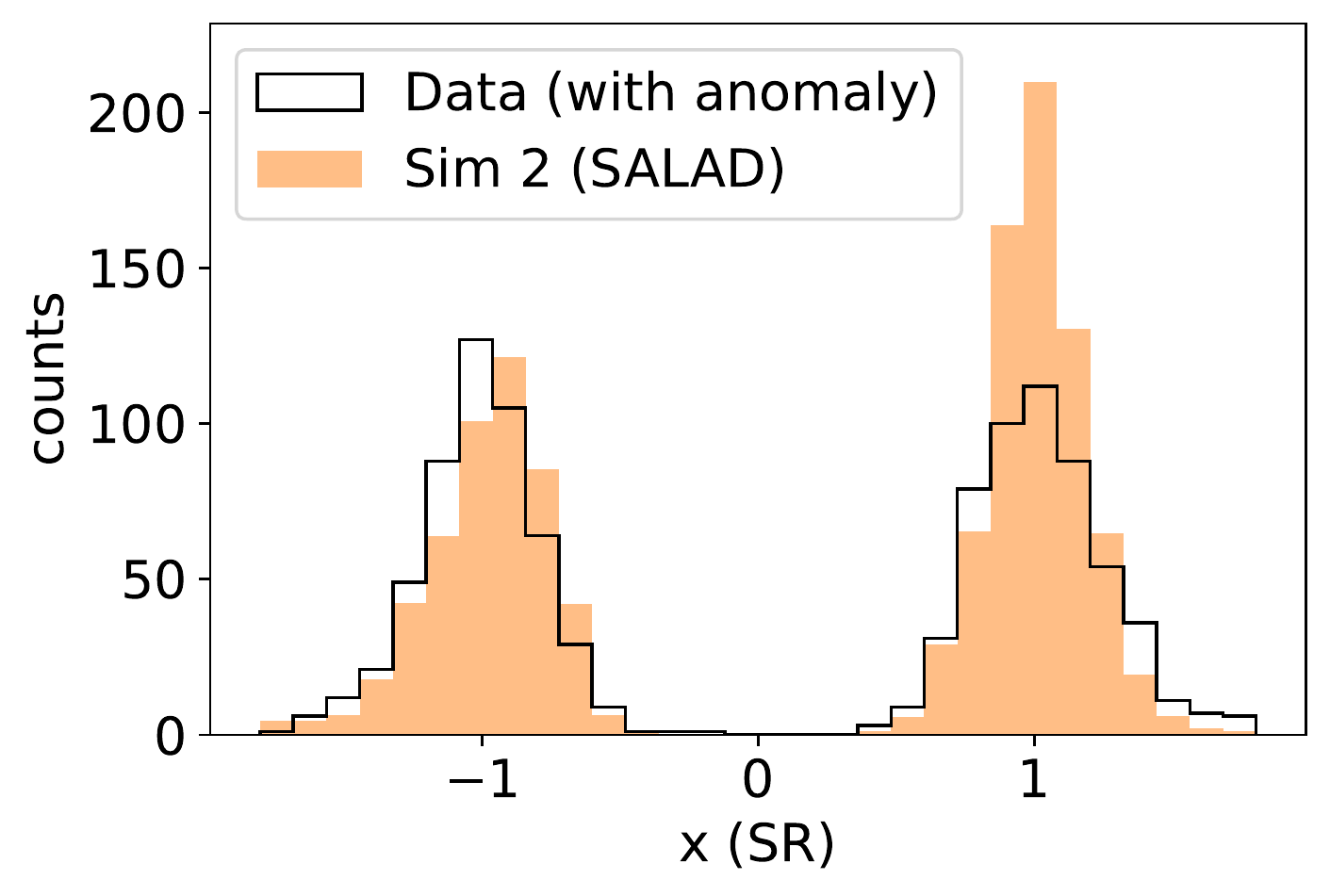}
    \includegraphics[width=0.45\textwidth]{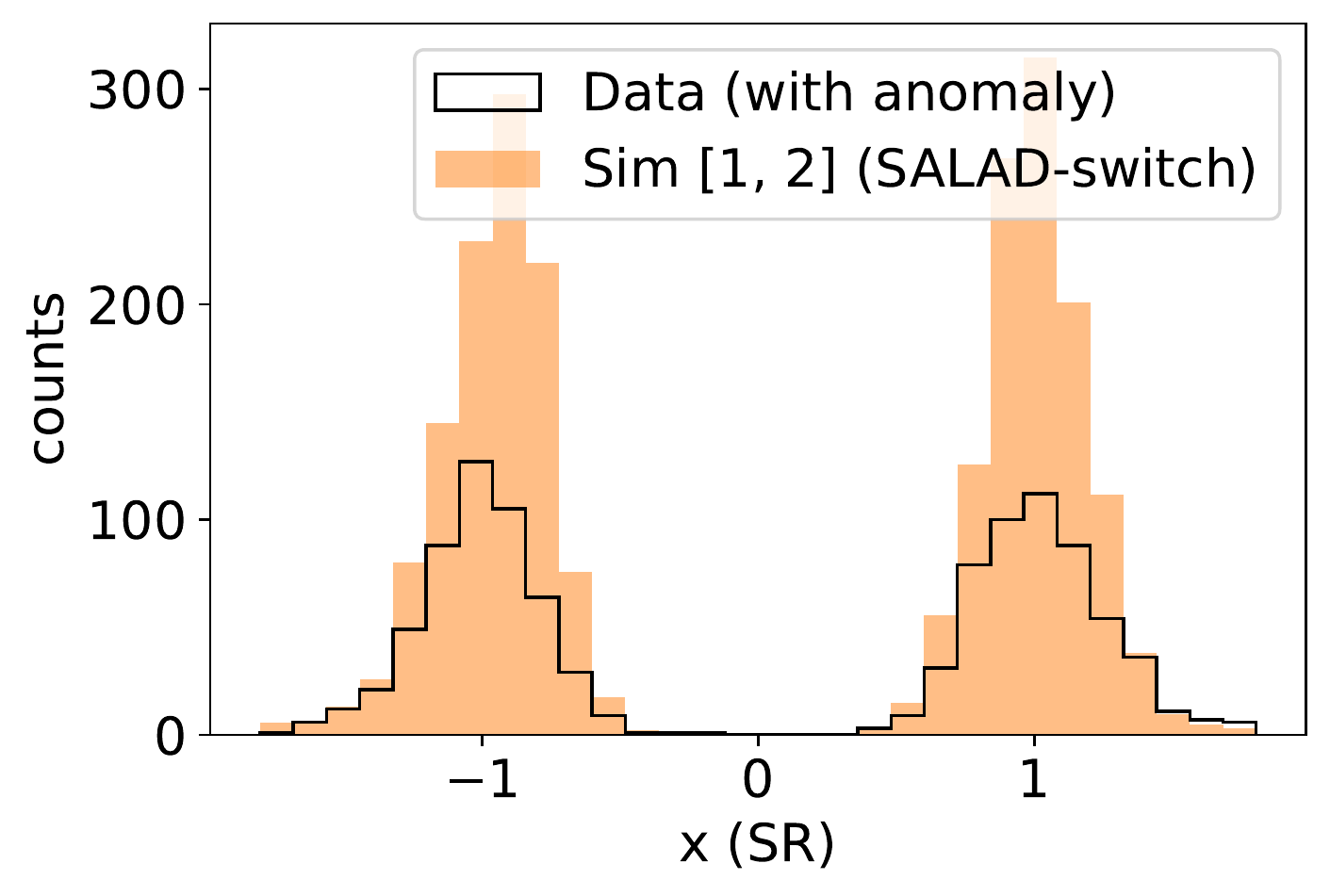}
    \includegraphics[width=0.45\textwidth]{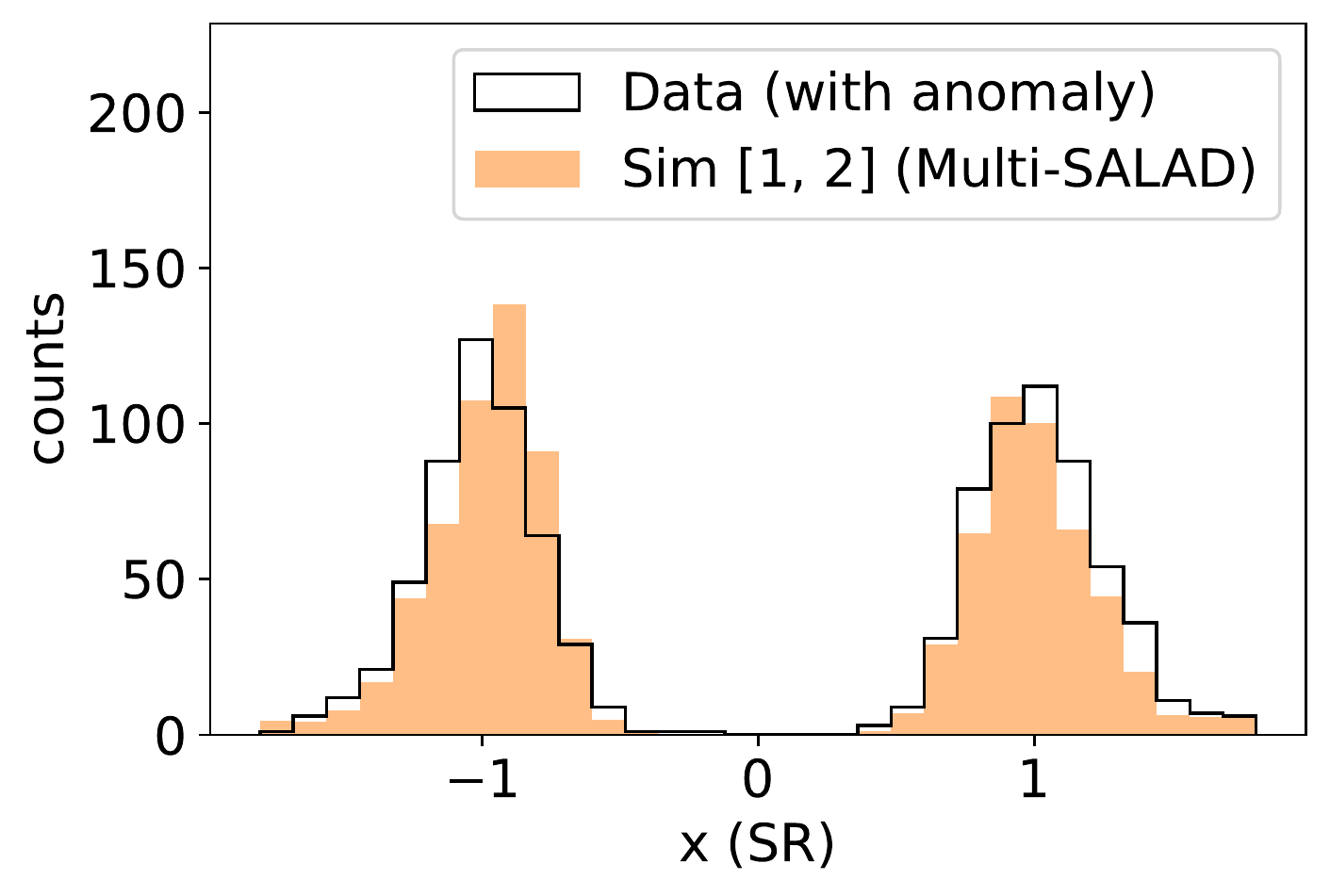}

    \caption{Top left: SALAD reweighting using simulation 1 on signal region. Top right: reweighting using simulation 2. Bottom left: using both simulation 1 and 2 weights separately. Bottom right: reweighting using simulation 1 and 2 combined.}
    \label{fig:salad_reweigh_sr}
\end{figure}

With these observations, we present the signal efficiency to rejection rate of each method in Figure~\ref{fig:salad_curve}, where we compare \multisalad against \salad using simulation 1 only, SALAD using simulation 2 only, and \switchsalad. Table~\ref{tab:salad} contains the accuracy and AUC scores for each method. Averaged over $10$ random seeds, \multisalad outperforms other methods. The signal efficiency to rejection rate for each of the 10 runs is available in Appendix~\ref{supp:exp}.

\begin{figure}[t]
    \centering
    \includegraphics[width=0.75\textwidth]{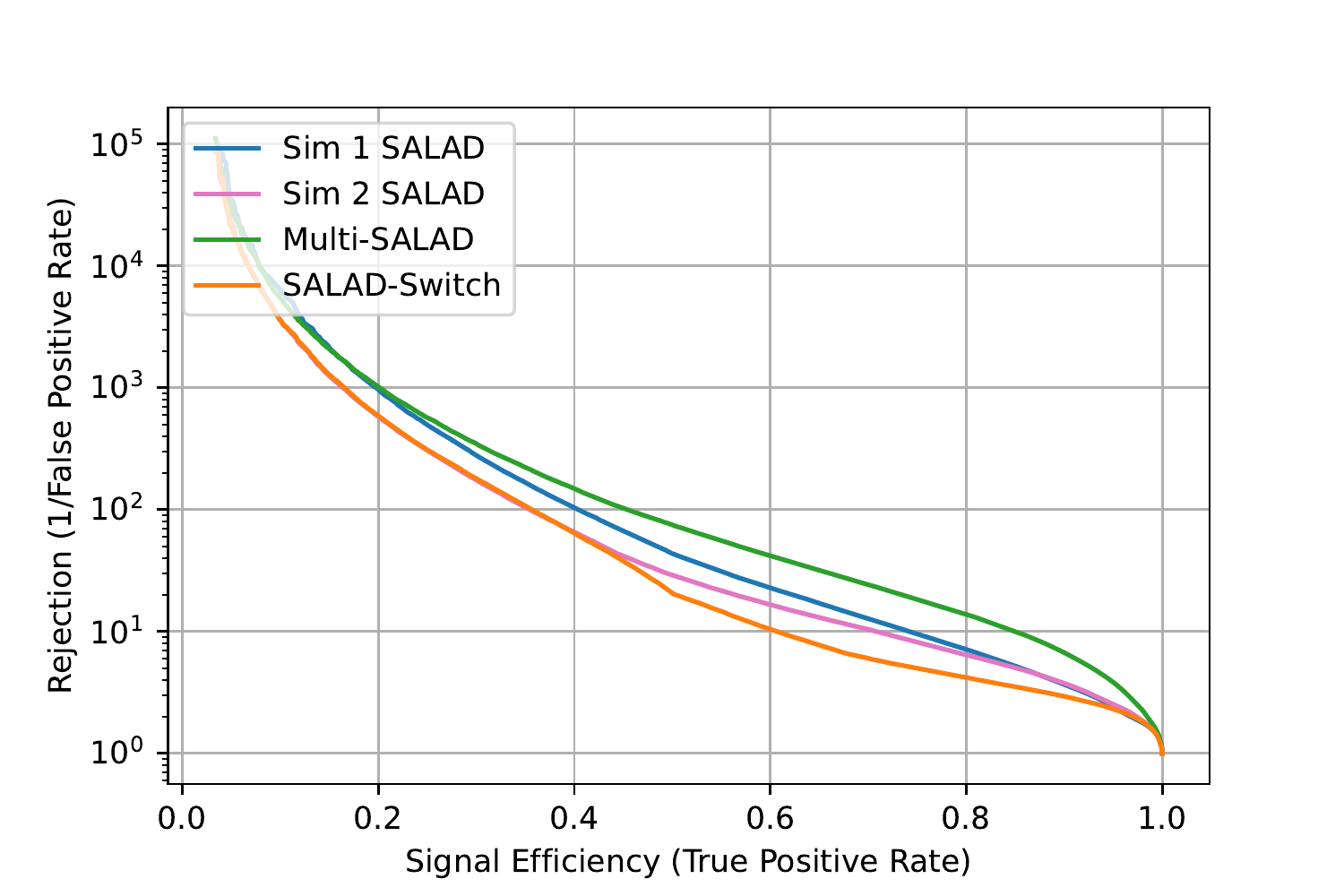}
    \caption{Signal efficiency to rejection of Multi-SALAD versus other baselines (weighted and unweighted).}
    \label{fig:salad_curve}
\end{figure}

\begin{table*}[h!]
    \centering
    \scriptsize
    \begin{tabular}{| c|c | c| c| c| c| c| c|} \hline
          &  \multicolumn{2}{|c|}{Simulation 1} & \multicolumn{2}{|c|}{Simulation 2} & \multicolumn{3}{|c|}{Simulation 1 and 2} \\ \hline
         Method & None & \salad & None & \salad & None & \switchsalad & \multisalad \\ \hline 
         Accuracy & $43.8_{\pm 2.2}$ & $62.5_{\pm 8.8}$ & $42.7_{\pm 3.6}$ & $64.3_{\pm 12.3}$ & $50.0_{\pm 0.0}$ & $54.3_{\pm 6.2}$ & $\mathbf{64.8}_{\pm 9.3}$ \\ \hline 
         AUC & $28.5_{\pm 4.2}$ & $80.7_{\pm 14.5}$ & $27.4_{\pm 4.5}$ & $78.7_{\pm 18.2}$ & $15.4_{\pm 5.3}$ & $74.7_{\pm 17.0}$ & $\mathbf{90.8}_{\pm 10.2}$ \\ \hline
    \end{tabular}
    \caption{Accuracy and AUC scores ($\%$) for \multisalad on two simulation datasets. We compare to \switchsalad (different reweighting), as well as standard \salad on individual simulations and no reweighting. Performance is averaged over $10$ random runs with one standard deviation reported.}
    \label{tab:salad}
\end{table*}

\section{Conclusions and Outlook}
\label{sec:conclusion}

We extend two resonant AD approaches to incorporate multiple reference datasets. For \multicwola, we draw from weak supervision models to handle multiple resonant features. For \multisalad, we combine multiple simulation datasets to best approximate the background process. Future work includes 1) exploring \multisalad's applicability on real data and algorithms for sampling from simulation datasets 2) extending \multicwola to model more complex relationships among resonant features and 3) using such approaches together over multiple simulations and resonant features, effectively utilizing as much information as possible. 


\section*{\label{sec::acknowledgments}Acknowledgments}

We thank David Shih and Jesse Thaler for useful discussions and comments about the manuscript.  BN was supported by the Department of Energy, Office of Science under contract number DE-AC02-05CH11231.
FS is grateful for the support of the NSF under CCF2106707 and the Wisconsin Alumni Research Foundation (WARF).
We gratefully acknowledge the support of NIH under No. U54EB020405 (Mobilize), NSF under Nos. CCF1763315 (Beyond Sparsity), CCF1563078 (Volume to Velocity), and 1937301 (RTML); ARL under No. W911NF-21-2-0251 (Interactive Human-AI Teaming); ONR under No. N000141712266 (Unifying Weak Supervision); ONR N00014-20-1-2480: Understanding and Applying Non-Euclidean Geometry in Machine Learning; N000142012275 (NEPTUNE); NXP, Xilinx, LETI-CEA, Intel, IBM, Microsoft, NEC, Toshiba, TSMC, ARM, Hitachi, BASF, Accenture, Ericsson, Qualcomm, Analog Devices, Google Cloud, Salesforce, Total, the HAI-GCP Cloud Credits for Research program,  the Stanford Data Science Initiative (SDSI), and members of the Stanford DAWN project: Facebook, Google, and VMWare. The U.S. Government is authorized to reproduce and distribute reprints for Governmental purposes notwithstanding any copyright notation thereon. Any opinions, findings, and conclusions or recommendations expressed in this material are those of the authors and do not necessarily reflect the views, policies, or endorsements, either expressed or implied, of NIH, ONR, or the U.S. Government.

\bibliographystyle{JHEP.bst}
\bibliography{main,HEPML}

\providecommand{\href}[2]{#2}\begingroup\raggedright\begin{thebibliography}{10}

\bibitem{hepmllivingreview}
{HEP ML Community}, ``{A Living Review of Machine Learning for Particle
  Physics}.''

\bibitem{Karagiorgi:2021ngt}
G.~Karagiorgi, G.~Kasieczka, S.~Kravitz, B.~Nachman and D.~Shih, \emph{{Machine
  Learning in the Search for New Fundamental Physics}},
  \href{https://arxiv.org/abs/2112.03769}{{\ttfamily 2112.03769}}.

\bibitem{Kasieczka:2021xcg}
G.~Kasieczka et~al., \emph{{The LHC Olympics 2020: A Community Challenge for
  Anomaly Detection in High Energy Physics}},
  \href{https://arxiv.org/abs/2101.08320}{{\ttfamily 2101.08320}}.

\bibitem{Aarrestad:2021oeb}
T.~Aarrestad et~al., \emph{{The Dark Machines Anomaly Score Challenge:
  Benchmark Data and Model Independent Event Classification for the Large
  Hadron Collider}},  \href{https://arxiv.org/abs/2105.14027}{{\ttfamily
  2105.14027}}.

\bibitem{collaboration2020dijet}
{ATLAS Collaboration}, \emph{{Dijet resonance search with weak supervision
  using 13 TeV pp collisions in the ATLAS detector}},
  \href{https://arxiv.org/abs/2005.02983}{{\ttfamily 2005.02983}}.

\bibitem{ATLAS-CONF-2022-045}
{\scshape ATLAS Collaboration} collaboration, \emph{{Anomaly detection search
  for new resonances decaying into a Higgs boson and a generic new particle $X$
  in hadronic final states using $\sqrt{s}$ = 13 TeV $pp$ collisions with the
  ATLAS detector}},  tech. rep., CERN, Geneva, 2022.

\bibitem{Kasieczka:2022naq}
G.~Kasieczka, R.~Mastandrea, V.~Mikuni, B.~Nachman, M.~Pettee and D.~Shih,
  \emph{{Anomaly Detection under Coordinate Transformations}},
  \href{https://arxiv.org/abs/2209.06225}{{\ttfamily 2209.06225}}.

\bibitem{2107.02821}
G.~Kasieczka, B.~Nachman and D.~Shih, \emph{{New Methods and Datasets for Group
  Anomaly Detection From Fundamental Physics}}, {\emph{ANDEA} (2021) },
  [\href{https://arxiv.org/abs/2107.02821}{{\ttfamily 2107.02821}}].

\bibitem{Collins:2018epr}
J.~H. Collins, K.~Howe and B.~Nachman, \emph{{Anomaly Detection for Resonant
  New Physics with Machine Learning}},
  \href{https://doi.org/10.1103/PhysRevLett.121.241803}{\emph{Phys. Rev. Lett.}
  {\bfseries 121} (2018) 241803},
  [\href{https://arxiv.org/abs/1805.02664}{{\ttfamily 1805.02664}}].

\bibitem{Collins:2019jip}
J.~H. Collins, K.~Howe and B.~Nachman, \emph{{Extending the search for new
  resonances with machine learning}},
  \href{https://doi.org/10.1103/PhysRevD.99.014038}{\emph{Phys. Rev.}
  {\bfseries D99} (2019) 014038},
  [\href{https://arxiv.org/abs/1902.02634}{{\ttfamily 1902.02634}}].

\bibitem{DAgnolo:2018cun}
R.~T. D'Agnolo and A.~Wulzer, \emph{{Learning New Physics from a Machine}},
  \href{https://doi.org/10.1103/PhysRevD.99.015014}{\emph{Phys. Rev.}
  {\bfseries D99} (2019) 015014},
  [\href{https://arxiv.org/abs/1806.02350}{{\ttfamily 1806.02350}}].

\bibitem{DAgnolo:2019vbw}
R.~T. D'Agnolo, G.~Grosso, M.~Pierini, A.~Wulzer and M.~Zanetti,
  \emph{{Learning Multivariate New Physics}},
  \href{https://arxiv.org/abs/1912.12155}{{\ttfamily 1912.12155}}.

\bibitem{1815227}
K.~Benkendorfer, L.~L. Pottier and B.~Nachman, \emph{{Simulation-Assisted
  Decorrelation for Resonant Anomaly Detection}},
  \href{https://arxiv.org/abs/2009.02205}{{\ttfamily 2009.02205}}.

\bibitem{Andreassen:2020nkr}
A.~Andreassen, B.~Nachman and D.~Shih, \emph{{Simulation Assisted
  Likelihood-free Anomaly Detection}},
  \href{https://doi.org/10.1103/PhysRevD.101.095004}{\emph{Phys. Rev. D}
  {\bfseries 101} (2020) 095004},
  [\href{https://arxiv.org/abs/2001.05001}{{\ttfamily 2001.05001}}].

\bibitem{Nachman:2020lpy}
B.~Nachman and D.~Shih, \emph{{Anomaly Detection with Density Estimation}},
  \href{https://doi.org/10.1103/PhysRevD.101.075042}{\emph{Phys. Rev. D}
  {\bfseries 101} (2020) 075042},
  [\href{https://arxiv.org/abs/2001.04990}{{\ttfamily 2001.04990}}].

\bibitem{Amram:2020ykb}
O.~Amram and C.~M. Suarez, \emph{{Tag N' Train: A Technique to Train Improved
  Classifiers on Unlabeled Data}},
  \href{https://arxiv.org/abs/2002.12376}{{\ttfamily 2002.12376}}.

\bibitem{Hallin:2021wme}
A.~Hallin, J.~Isaacson, G.~Kasieczka, C.~Krause, B.~Nachman, T.~Quadfasel
  et~al., \emph{{Classifying Anomalies THrough Outer Density Estimation
  (CATHODE)}},  \href{https://arxiv.org/abs/2109.00546}{{\ttfamily
  2109.00546}}.

\bibitem{Raine:2022hht}
J.~A. Raine, S.~Klein, D.~Sengupta and T.~Golling, \emph{{CURTAINs for your
  Sliding Window: Constructing Unobserved Regions by Transforming Adjacent
  Intervals}},  \href{https://arxiv.org/abs/2203.09470}{{\ttfamily
  2203.09470}}.

\bibitem{dAgnolo:2021aun}
R.~T. d'Agnolo, G.~Grosso, M.~Pierini, A.~Wulzer and M.~Zanetti,
  \emph{{Learning New Physics from an Imperfect Machine}},
  \href{https://arxiv.org/abs/2111.13633}{{\ttfamily 2111.13633}}.

\bibitem{Chakravarti:2021svb}
P.~Chakravarti, M.~Kuusela, J.~Lei and L.~Wasserman, \emph{{Model-Independent
  Detection of New Physics Signals Using Interpretable Semi-Supervised
  Classifier Tests}},  \href{https://arxiv.org/abs/2102.07679}{{\ttfamily
  2102.07679}}.

\bibitem{Dillon:2022tmm}
B.~M. Dillon, R.~Mastandrea and B.~Nachman, \emph{{Self-supervised Anomaly
  Detection for New Physics}},
  \href{https://arxiv.org/abs/2205.10380}{{\ttfamily 2205.10380}}.

\bibitem{Letizia:2022xbe}
M.~Letizia, G.~Losapio, M.~Rando, G.~Grosso, A.~Wulzer, M.~Pierini et~al.,
  \emph{{Learning new physics efficiently with nonparametric methods}},
  \href{https://arxiv.org/abs/2204.02317}{{\ttfamily 2204.02317}}.

\bibitem{2203.09601}
K.~Krzyzanska and B.~Nachman, \emph{{Simulation-based Anomaly Detection for
  Multileptons at the LHC}},
  \href{https://arxiv.org/abs/2203.09601}{{\ttfamily 2203.09601}}.

\bibitem{Alvi:2022fkk}
S.~Alvi, C.~Bauer and B.~Nachman, \emph{{Quantum Anomaly Detection for Collider
  Physics}},  \href{https://arxiv.org/abs/2206.08391}{{\ttfamily 2206.08391}}.

\bibitem{Sjostrand:2007gs}
T.~Sjostrand, S.~Mrenna and P.~Z. Skands, \emph{{A Brief Introduction to PYTHIA
  8.1}}, \href{https://doi.org/10.1016/j.cpc.2008.01.036}{\emph{Comput. Phys.
  Commun.} {\bfseries 178} (2008) 852--867},
  [\href{https://arxiv.org/abs/0710.3820}{{\ttfamily 0710.3820}}].

\bibitem{Bellm:2015jjp}
J.~Bellm et~al., \emph{{Herwig 7.0/Herwig++ 3.0 release note}},
  \href{https://doi.org/10.1140/epjc/s10052-016-4018-8}{\emph{Eur. Phys. J. C}
  {\bfseries 76} (2016) 196},
  [\href{https://arxiv.org/abs/1512.01178}{{\ttfamily 1512.01178}}].

\bibitem{Sherpa:2019gpd}
{\scshape Sherpa} collaboration, E.~Bothmann et~al., \emph{{Event Generation
  with Sherpa 2.2}},
  \href{https://doi.org/10.21468/SciPostPhys.7.3.034}{\emph{SciPost Phys.}
  {\bfseries 7} (2019) 034},
  [\href{https://arxiv.org/abs/1905.09127}{{\ttfamily 1905.09127}}].

\bibitem{Metodiev:2017vrx}
E.~M. Metodiev, B.~Nachman and J.~Thaler, \emph{{Classification without labels:
  Learning from mixed samples in high energy physics}},
  \href{https://doi.org/10.1007/JHEP10(2017)174}{\emph{JHEP} {\bfseries 10}
  (2017) 174}, [\href{https://arxiv.org/abs/1708.02949}{{\ttfamily
  1708.02949}}].

\bibitem{neyman1933ix}
J.~Neyman and E.~S. Pearson, \emph{On the problem of the most efficient tests
  of statistical hypotheses}, {\emph{Phil. Trans. R. Soc. Lond. A} {\bfseries
  231} (1933) 289}.

\bibitem{Ratner18}
A.~Ratner, S.~H. Bach, H.~Ehrenberg, J.~Fries, S.~Wu and C.~R\'{e},
  \emph{Snorkel: Rapid training data creation with weak supervision},  in
  \emph{Proceedings of the 44th International Conference on Very Large Data
  Bases (VLDB)}, (Rio de Janeiro, Brazil), 2018.

\bibitem{Fu20}
D.~Y. Fu, M.~F. Chen, F.~Sala, S.~M. Hooper, K.~Fatahalian and C.~R/`{e},
  \emph{Fast and three-rious: Speeding up weak supervision with triplet
  methods},  in \emph{International Conference on Machine Learning}, 2020,
  \href{https://arxiv.org/pdf/2002.11955.pdf}{https://arxiv.org/pdf/2002.11955.pdf}.

\bibitem{ratner2016data}
A.~Ratner, C.~D. Sa, S.~Wu, D.~Selsam and C.~R\'{e}, \emph{Data programming:
  Creating large training sets, quickly},  in \emph{Proceedings of the 30th
  International Conference on Neural Information Processing Systems}, NIPS'16,
  (Red Hook, NY, USA), p.~3574–3582, Curran Associates Inc., 2016.

\bibitem{ratner2019training}
A.~Ratner, B.~Hancock, J.~Dunnmon, F.~Sala, S.~Pandey and C.~Ré,
  \emph{Training complex models with multi-task weak supervision},  in
  \emph{Proceedings of the AAAI Conference on Artificial Intelligence}, Jul,
  2019.

\bibitem{varma2019learning}
P.~Varma, F.~Sala, A.~He, A.~Ratner and C.~Re, \emph{Learning dependency
  structures for weak supervision models},  in \emph{Proceedings of the 36th
  International Conference on Machine Learning}, 2019.

\bibitem{wainwright2008graphical}
M.~J. Wainwright, M.~I. Jordan et~al., \emph{Graphical models, exponential
  families, and variational inference}, {\emph{Foundations and
  Trends{\textregistered} in Machine Learning} {\bfseries 1} (2008) 1--305}.

\bibitem{Nachman:2021yvi}
B.~Nachman and J.~Thaler, \emph{{E Pluribus Unum Ex Machina: Learning from Many
  Collider Events at Once}},
  \href{https://arxiv.org/abs/2101.07263}{{\ttfamily 2101.07263}}.

\bibitem{hastie01statisticallearning}
T.~Hastie, R.~Tibshirani and J.~Friedman, \emph{The Elements of Statistical
  Learning}.
\newblock Springer Series in Statistics. Springer New York Inc., New York, NY,
  USA, 2001.

\bibitem{sugiyama_suzuki_kanamori_2012}
M.~Sugiyama, T.~Suzuki and T.~Kanamori, \emph{Density Ratio Estimation in
  Machine Learning}.
\newblock Cambridge University Press, 2012,
  \href{https://doi.org/10.1017/CBO9781139035613}{10.1017/CBO9781139035613}.

\bibitem{Cranmer:2015bka}
K.~Cranmer, J.~Pavez and G.~Louppe, \emph{{Approximating Likelihood Ratios with
  Calibrated Discriminative Classifiers}},
  \href{https://arxiv.org/abs/1506.02169}{{\ttfamily 1506.02169}}.

\bibitem{dasgupta2003boosting}
S.~Dasgupta and P.~M. Long, \emph{Boosting with diverse base classifiers},  in
  \emph{Learning Theory and Kernel Machines} (B.~Sch{\"o}lkopf and M.~K.
  Warmuth, eds.), (Berlin, Heidelberg), pp.~273--287, Springer Berlin
  Heidelberg, 2003.

\bibitem{cortes2010learning}
C.~Cortes, Y.~Mansour and M.~Mohri, \emph{Learning bounds for importance
  weighting},  in \emph{Advances in Neural Information Processing Systems}
  (J.~Lafferty, C.~Williams, J.~Shawe-Taylor, R.~Zemel and A.~Culotta, eds.),
  vol.~23, Curran Associates, Inc., 2010,
  \href{https://proceedings.neurips.cc/paper/2010/file/59c33016884a62116be975a9bb8257e3-Paper.pdf}{https://proceedings.neurips.cc/paper/2010/file/59c33016884a62116be975a9bb8257e3-Paper.pdf}.

\bibitem{Ma22}
T.~Ma, \emph{Lecture notes for machine learning theory (cs229m/stats214)},
  June, 2022.

\bibitem{bartlett2002rademacher}
P.~L. Bartlett and S.~Mendelson, \emph{Rademacher and gaussian complexities:
  Risk bounds and structural results}, {\emph{Journal of Machine Learning
  Research} {\bfseries 3} (2002) 463--482}.

\bibitem{mohri2018foundations}
M.~Mohri, A.~Rostamizadeh and A.~Talwalkar, \emph{Foundations of machine
  learning}.
\newblock MIT press, 2018.

\bibitem{paszke2019pytorch}
A.~Paszke, S.~Gross, F.~Massa, A.~Lerer, J.~Bradbury, G.~Chanan et~al.,
  \emph{Pytorch: An imperative style, high-performance deep learning library},
  in \emph{Advances in Neural Information Processing Systems 32},
  pp.~8024--8035.
\newblock Curran Associates, Inc., 2019.

\bibitem{chollet2015keras}
F.~Chollet et~al., \emph{Keras},  2015.

\end{thebibliography}\endgroup

\newpage 

\appendix 

\section*{Appendix}
We provide a glossary of notation in~\ref{supp:glossary}. We provide algorithmic details for \multisalad in Section~\ref{supp:algs}. We present additional theoretical results on Rademacher complexities and the asymptotic behavior of \salad in Section~\ref{supp:more_theory}. In section~\ref{supp:proofs}, we provide proofs for our theoretical results. In section~\ref{supp:exp}, we provide additional experimental details.

\section{Glossary} \label{supp:glossary}

The glossary is given in Table~\ref{table:glossary}.

\begin{table*}[ht]
\centering
\small
\begin{tabular}{l l}
\toprule
Symbol & Used for \\
\midrule
$x$ & Discriminative feature $x \in \X$. \\
$m$ & Resonant feature vector of length $k$, $m = [m^1, \dots, m^k] \in \R^k$. \\
$y$ & True unknown label $y \in \Y = \{0, +1\}$, where $0$ is background and $1$ is signal. \\
$\mathcal{P}, p$ & Distribution and density of data $(x, m, y)$. \\
$I_{m^i}$ & Interval along which $i$th resonant feature $m^i$ is thresholded to produce \\
& signal region and sideband. \\
$SR, SB$ & Signal region and sideband. For an interval $I_{m^i}$, $SR_i = \{(x, m): m^i \in I_{m^i}\}$   \\
& and $SB_i = \{(x, m): m^i \notin I_{m^i}\}$. \\
$f$ & Classifier $f: \X \rightarrow \Y$ used for anomaly detection. \\
$\D$ & Unlabeled dataset $\D = \{(x_i, m_i)\}_{i = 1}^n$ of discriminative and resonant features. \\
$\D_{SR}, \D_{SB}$ & Signal region and sideband of $\D$, $\D_{SR} = \D \cap SR$, $\D_{SB} = \D \cap SB$. \\
$\eta_{SR}, \eta_{SB}$ & Mixture weights corresponding to $p(y = 1 | x \in SR)$ and $p(y = 1 | x \in SB)$. \\
& It is assumed that $\eta_{SR} > \eta_{SB}$. \\
$M_i(m)$ & Noisy membership label for the $i$th resonant feature, equal to $0$ if $x \in \D_{SB_i}$ \\
& and $1$ if $x \in \D_{SR_i}$. $\mathbf{M}(m) = M_1(m), \dots, M_k(m)$.\\
$\hat{y}$ & Weak label drawn from estimated distribution on $p(y | \mathbf{M}(m))$. \\
$\theta_y, \theta_i$ & Canonical parameters of graphical model on $y, \mathbf{M}(m)$ in~\eqref{eq:pgm}. \\
& $\theta_y$ scales with the class balance of $y$ and $\theta_i$ scales with the accuracy of $M_i(m)$. \\
$Z$ & Partition function used for normalizing distribution $p(y, \mathbf{M}(m))$ in~\eqref{eq:pgm}. \\
$\widetilde{y}, \widetilde{\mathbf{M}}(m)$ & $y$ and $\mathbf{M}(m)$ scaled from $\{0, 1\}$ to $\{-1, 1\}$. \\
$\alpha_i$ & Accuracy parameter $\alpha_i = p(M_i(m) = 1 | y = 1)$ for the membership label \\
& of the $i$th resonant feature. \\
$\ell_C$ & Loss function $\ell_C: \Y \times \Y \rightarrow \R$ for training classifier $f$. \\
$L_C(f)$ & Expected loss on labeled data using $f$, $L_C(f) = \E{}{\ell_C(f(x), y)}$. \\
$f^\star$ & Optimal classifier trained on infinite labeled data, $f^\star = \argmin{f \in \F}{ L_C(f)}$. \\
$\hat{L}_C(f)$ & Empirical loss on $\D$ with weak labels using $f$, $\hat{L}_C(f) = \frac{1}{n} \sum_{i = 1}^n \ell_C(f(x_i), \hat{y}_i)$. \\
$\hat{f}$ & Classifier learned using \multicwola, $\hat{f} = \argmin{f \in \F}{\hat{L}_C(f)}$. \\
$\dsim$ & Simulation dataset used in standard \salad, $\dsim = \{(x_i, m_i)\}_{i = 1}^{\Nsim}$. \\
& Has distribution $\Psim$ and density $\psim(\cdot)$. \\
$\dsim_{SB}$, $\dsim_{SR}$ & $\dsim_{SB}=\dsim \cap SB$, $\dsim_{SR} = \dsim \cap SR$. \\
$w(x, m)$ & Density ratio between $\dsim_{SB}$ and $\D_{SB}$ used for reweighting, \\
&$w(x, m) = \frac{p(x, m | y = 0)}{\psim(x, m | y = 0)}$. \\
$\hat{g}$ & Classifier trained to classify $\dsim_{SB}$ vs $\D_{SB}$, used for approximating $w(x, m)$  \\
& when $|\dsim_{SB}| = |\D_{SB}|$. \\
$L_S(h, w)$ & Cross-entropy loss function used to classify $\dsim_{SR}$ reweighted with $w$ vs $\D_{SR}$. \\
$\hat{h}$ & Classifier trained using $L_S$. \\
$\dsim_1, \dots, \dsim_k$ & $k$ multiple simulation datasets used in $\multisalad$. \\
$\widetilde{\mathcal{D}}_{\mathrm{sim}}$ & Dataset aggregated from $\dsim_1, \dots, \dsim_k$. \\
$n^{SR}$ & $n^{SR} = |\D_{SR} |$. \\
$n^{SB}$ & $n^{SB} = |\D_{SB} |$. \\
$\Nsim^{SR}$ & $\Nsim^{SR} = |\dsim_{SR} |$. \\
$h^\star$ & The optimal classifier $h^\star = \argmin{h \in \mathcal{H}}{L_S(h, w)}$. \\
$W$ & The maximum ratio between the simulation and true background, \\
& $W = \max_{x, m} w(x, m)$. \\
\toprule
\end{tabular}
\caption{
	Glossary of variables and symbols used in this paper.
}
\label{table:glossary}
\end{table*}


\section{Additional Algorithmic Details}\label{supp:algs}

\subsection{\multisalad Algorithm} \label{supp:multisalad}

\multisalad is described in Algorithm~\ref{alg:multisalad}.
We have simulation datasets $\dsim_1, \dots \dsim_k$, where $\dsim_i = \{(x_j, m_j)\}_{j = 1}^{\Nsim}$ and all points belong to the background $(y = 0)$. 
As discussed in Section~\ref{sec:salad}, we propose using these simulation datasets by aggregating them into a single simulation dataset $\dsim$ (whether it be with uniform or stratified sampling, etc.) Then the rest of this section proceeds as follows and is a review of the standard \salad method.

\paragraph{Reweighting} First, we learn weights to correct for the bias of the simulated background data. We split the both simulation and true data along $m$ to produce sets $\dsim_{SR}, \dsim_{SB}$ and $\D_{SR}$ and $\D_{SB}$. We train a classifier over $\dsim_{SB}$ and $\D_{SB}$ to distinguish between simulation and real data in the sideband region. That is, we train a binary classifier $\hat{g}$ over points $(x, m, z)$ in the sideband where $x, m$ is either from $\psim(\cdot | y = 0)$ ($z = 0$) or $p(\cdot | y = 0)$ ($z = 1$), where we recall that simulation data only contains $y = 0$, and no anomalies are present in the sideband. Denote $q$ as the joint density of $(x, m, z)$. We define the weight as the estimated likelihood ratio
\begin{align*}
    \hat{w}(x, m) &= \frac{\hat{g}(x, m)}{1 - \hat{g}(x, m)} \approx \frac{q(z = 1 | x, m)}{q(z = 0| x, m)} = \frac{q(x, m | z = 1)}{q(x, m | z = 0)} \cdot \frac{q(z = 1)}{q(z = 0)} \\
    &= \frac{q(x, m | z = 1)}{q(x, m | z = 0)} = \frac{p(x, m | y = 0)}{\psim(x, m | y = 0)}.
    \label{eq:w}
\end{align*}

Here, we assume that $q(z = 1) = q(z = 0)$ (i.e. balanced simulation and real dataset, which we can always ensure by generating more or less simulation data). Equality is obtained in the expression above when $\hat{g}$ is Bayes-optimal. 

\paragraph{Training} The above $\hat{w}(x, m)$ is defined on the sideband region. Next, we interpolate and correct the bias of the simulation in the signal region.
Let $\dsim_{SR}$ be the set of simulation data in the signal region of size $\Nsim^{SR}$, and let $\D_{SR}$ be the set of true data in the signal region of size $\Ndata^{SR}$, for a total of $n^{SR}$ points.
We train a classifier $h$ to distinguish between the reweighted simulated data, which approximates true background data, and the true data. In particular, the loss function used is 
\begin{align}
    \hat{L}_S(h, \hat{w}) = - \frac{1}{n^{SR}} \bigg(\sum_{x \in \D_{SR}} \log h(x, m) + \sum_{x \in \dsim_{SR}} \hat{w}(x, m) \log (1 - h(x,m )) \bigg).
\end{align}

In expectation with an optimal $w$, we can see that minimizing this loss is equivalent to minimizing the cross-entropy loss on a task that distinguishes between points drawn from $p$ and points drawn from $p(\cdot | y =0)$ in the signal region. Therefore, $h$ can be used for anomaly detection. The procedure is summarized in Algorithm~\ref{alg:multisalad}.

\begin{algorithm}[t]
\caption{\multisalad}
\begin{algorithmic}[1]
\STATE \textbf{Input:} Simulation datasets $\dsim_1, \dots, \dsim_k$ and real dataset $\D$.
\STATE Construct overall simulation dataset $\dsim = \bigcup_{i = 1}^k \dsim_i$.
\STATE Split each dataset into signal region and sideband region using resonant feature $m$ to get $\{\dsim_{SR}, \dsim_{SB} \}$ and $\{\D_{SR}, \D_{SB} \}$.
\STATE Learn weight $\hat{w}(x, m) = \frac{\hat{g}(x, m)}{1 - \hat{g}(x, m)}$, where $\hat{g}$ is a classifier that distinguishes data $\D_{SB}$ from simulation $\dsim_{SB}$ in the sideband region.
\STATE Train a new classifier $\hat{h}$ on the signal region to distinguish between points in $\D_{SR}$ and points in $\dsim_{SR}$ reweighted by $\hat{w}$, using the following loss:
\begin{align}
    \hat{L}_S(h, \hat{w}) = - \frac{1}{n^{SR}} \bigg(\sum_{x \in \D_{SR}} \log h(x, m) + \sum_{x \in \dsim_{SR}} \hat{w}(x, m) \log (1 - h(x,m )) \bigg).
\end{align}
\STATE \textbf{Output:} Classifier output $\hat{h}(x, m)$, which yields a score that is thresholded for anomaly detection.
\end{algorithmic}
\label{alg:multisalad}
\end{algorithm}

\section{Additional Theoretical Results} \label{supp:more_theory}
\subsection{The Need for 3 Resonant Features}
\label{subsec:three}
We show that to identify the model \eqref{eq:pgm}, we need at least $k = 3$ resonant features.

\begin{lemma}
If $k =1$ or $k=2$ in model~\eqref{eq:pgm}, the parameters $\theta_1$ and $\theta_2$ cannot be recovered from the observable quantities.
\label{lemma:small_k}
\end{lemma}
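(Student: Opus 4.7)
The plan is to count observable moments against free parameters and, where counts do not already suffice, exhibit an explicit one-parameter family of distinct parameter vectors that produce identical observable marginals, so the parameters cannot be recovered.

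For $k=1$, I would begin by listing the observables: since $y$ is unobserved, the only information accessible from samples is the marginal $p(M_1(m))$ on a binary variable, which is a single scalar (the mean $\mathbb{E}[\widetilde{M}_1(m)]$). The model \eqref{eq:pgm} however has two free canonical parameters $\theta_y$ and $\theta_1$ (after absorbing normalization into $Z$). One scalar cannot pin down two parameters, so infinitely many $(\theta_y,\theta_1)$ give the same observable. To make the non-identifiability concrete, I would write out $p(M_1(m)=1)$ as a function of $(\theta_y,\theta_1)$ by marginalizing out $y$, and point to the resulting one-dimensional level set.

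For $k=2$, the observables are $\mathbb{E}[\widetilde{M}_1(m)]$, $\mathbb{E}[\widetilde{M}_2(m)]$, and $\mathbb{E}[\widetilde{M}_1(m)\widetilde{M}_2(m)]$ -- three numbers -- against the three canonical parameters $\theta_y,\theta_1,\theta_2$. Here naive counting is not enough, so I would instead lean on the triplet identity used for parameter estimation. The triplet method recovers accuracies from the three pairwise equations $\mathbb{E}[\widetilde{y}\widetilde{M}_a]\mathbb{E}[\widetilde{y}\widetilde{M}_b]=\mathbb{E}[\widetilde{M}_a\widetilde{M}_b]$ for $a,b\in\{1,2,3\}$. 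With only $k=2$, a single such equation remains, $\mathbb{E}[\widetilde{y}\widetilde{M}_1]\mathbb{E}[\widetilde{y}\widetilde{M}_2]=\mathbb{E}[\widetilde{M}_1\widetilde{M}_2]$, which is one equation in two unknowns and has a whole hyperbola of solutions. I would then translate such a choice back into $(\theta_y,\theta_1,\theta_2)$ to show that genuinely distinct parameter vectors realize the same observable joint distribution on $(M_1(m),M_2(m))$.

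To close the argument cleanly, I would exhibit an explicit symmetry: the involution that flips the latent label, $\widetilde{y}\mapsto -\widetilde{y}$, together with $\theta_y\mapsto-\theta_y$ and $\theta_i\mapsto-\theta_i$ for every $i$, leaves the joint $p(y,\mathbf{M}(m);\theta)$ in \eqref{eq:pgm} invariant as a distribution over $(\mathbf{M}(m))$ after marginalizing $y$. For $k\ge 3$, the triplet system combined with the positive-correlation assumption (each $\mathbb{E}[\widetilde{y}\widetilde{M}_i]>0$) breaks this sign ambiguity; for $k\le 2$ the assumption is insufficient, because there is an additional continuous degeneracy along the curve $\mathbb{E}[\widetilde{y}\widetilde{M}_1]\mathbb{E}[\widetilde{y}\widetilde{M}_2]=\text{const}$ that preserves signs.

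The main obstacle I expect is the $k=2$ case: parameter counting alone is inconclusive, and the argument really needs the concrete observation that the triplet equations collapse to a single underdetermined equation. Writing out the resulting one-parameter family explicitly, and verifying that the full joint over $(M_1(m),M_2(m))$ (not merely the three moments) coincides on this family, is the step that requires care, but it should follow directly from the fact that a binary pair's distribution is determined by exactly those three moments.
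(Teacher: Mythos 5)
Your proposal is correct in substance and lands close to the paper's own argument, which likewise proves non-identifiability by exhibiting distinct parameter vectors with identical observable distributions. For $k=1$ the paper simply sets $\theta_y=0$ and computes $P(\widetilde{M}_1(m)=1)=1/2$ for every $\theta_1$; your parameter-counting plus the promised explicit level set amounts to the same computation. For $k=2$ the routes diverge in flavor: the paper exhibits only the discrete swap symmetry $(\theta_1,\theta_2)\mapsto(\theta_2,\theta_1)$ (again at $\theta_y=0$), whereas you identify a one-parameter family via the single surviving moment identity $\mathbb{E}[\widetilde{y}\widetilde{M}_1(m)]\,\mathbb{E}[\widetilde{y}\widetilde{M}_2(m)]=\mathbb{E}[\widetilde{M}_1(m)\widetilde{M}_2(m)]$, which is a strictly stronger statement about the degeneracy. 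One step in your sketch needs care, however: the observable joint of two binary variables is determined by three moments, not one, and under model~\eqref{eq:pgm} one has $\mathbb{E}[\widetilde{M}_i(m)]=\tanh(\theta_y)\tanh(\theta_i)$ and $\mathbb{E}[\widetilde{M}_1(m)\widetilde{M}_2(m)]=\tanh(\theta_1)\tanh(\theta_2)$. If $\theta_y\neq 0$, the two first-order moments determine the ratio $\tanh(\theta_1)/\tanh(\theta_2)$, and combined with the product they recover each $|\tanh(\theta_i)|$, so your continuum collapses to the global sign ambiguity. The hyperbola of solutions is genuinely a continuum only on the slice $\theta_y=0$ (balanced classes), where the first-order moments vanish identically and carry no information --- which is exactly the slice the paper works on. Since the lemma only requires exhibiting some non-identifiable instance, restricting to $\theta_y=0$ closes the argument cleanly; just make that restriction explicit rather than leaving the reader to verify that the full joint, and not merely the second-order moment, is preserved along your family.
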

\begin{proof}
The strategy we use to show that the model cannot be identified for $k=1$ or $k=2$ is to prove that the observable distributions $P(\widetilde{M}_1(m), \ldots, \widetilde{M}_k(m))$ are consistent with multiple values of $\theta$. We do so by direct calculation. 

First, consider the case of $k=1$. Set $\theta_y = 0$ for simplicity. Then, the model is $\frac{1}{Z}\exp(\theta \widetilde{M}_1(m) \widetilde{y})$. Then $Z = 2\exp(\theta) + 2\exp(-\theta)$, and \[P(\widetilde{M}_1(m)= 1) = \frac{\exp(\theta) + \exp(-\theta)}{2\exp(\theta) + 2\exp(-\theta)} = \frac{1}{2}.\]
Thus, any $\theta$ value produces the same observable distribution, so that we cannot identify $\theta$.

Next, we consider $k=2$. Again, set $\theta_y = 0$. The model is now $\frac{1}{Z}\exp(\theta_1 \widetilde{M}_1(m) \widetilde{y} + \theta_2 \widetilde{M}_2(m) \widetilde{y})$. We similarly compute 
\[Z = 2(\exp(\theta_1 + \theta_2) + \exp(-\theta_1 + \theta_2) + \exp(\theta_1 - \theta_2) + \exp(-\theta_1-\theta_2)).
\]
The observable distribution is now $P(\widetilde{M}_1(m), \widetilde{M}_2(m))$. We have that
\[P(\widetilde{M}_1(m) = 1, \widetilde{M}_2(m) = 1) = \frac{1}{Z} (\exp(\theta_1 + \theta_2) + \exp(-\theta_1- \theta_2)), \]
and 
\[P(\widetilde{M}_1(m) = 1, \widetilde{M}_2(m) = -1) = \frac{1}{Z} (\exp(\theta_1 - \theta_2) + \exp(-\theta_1 + \theta_2)).\]

Note that we have $P(\widetilde{M}_1(m) = -1, \widetilde{M}_2(m) = -1) = P(\widetilde{M}_1(m) = 1, \widetilde{M}_2(m) = 1)$ and $P(\widetilde{M}_1(m) = -1, \widetilde{M}_2(m) = 1) = P(\widetilde{M}_1(m) = 1, \widetilde{M}_2(m) = -1)$.

As a result, we have the same distribution $P(\widetilde{M}_1(m), \widetilde{M}_2(m))$ for the parameters $\theta_1, \theta_2 = a, b$ and for $\theta_1, \theta_2 = b, a$, where $a, b$ are some non-negative values. If $a \neq b$, we end up with at least two solutions that cannot be distinguished, completing the proof.
\end{proof}

\subsection{Rademacher Complexity Bounds} \label{supp:rademacher}

We present bounds on the Rademacher complexity $\mathfrak{R}_n(\F)$ of various models $\F$.
For all of the $\F$ below, we obtain $\mathfrak{R}_n(\ell \circ \F)$ by computing $\mathfrak{R}_n(\F)$. These two Rademacher complexities are equal when we assume that $\ell$ is $1$-Lipschitz and apply Talagrand's lemma.

\begin{itemize}
    \item \textbf{Linear models:} We define $f_{\theta}(x) = \theta^\top x$ with $\|\theta\|_2 \leq B$ and $E[\|x\|_2^2] \leq C^2$, $\mathfrak{R}_n(\F) \leq \frac{BC}{\sqrt{n}}$ \cite[Theorem 5.5]{Ma22}. 

    \item \textbf{Two-layer feed-forward neural networks (MLPs):} We define $f_{\theta}(x)$ where $\theta = (U, w)$ are the parameters for the weights for the two layers of an MLP. Here $U \in \mathbb{R}^{m \times d}$ and $w \in \mathbb{R}^m$. Suppose ReLU is the activation function, $\|w\|_2 \leq B_w$, $\|u_i\|_2 \leq B_u$ for all $1 \leq i \leq m$, and that $\mathbb{E}[\|x\|_2^2 \leq C^2$. Then, $\mathfrak{R}_n(\F) \leq 2B_wB_uC\sqrt{\frac{m}{n}}$ \cite[Theorem 5.9]{Ma22}.
    
    \item \textbf{Kernels:} Let $k : \mathcal{X} \times \mathcal{X} \rightarrow \mathbb{R}$ be a continuous symmetric function so that for $x_1, \ldots, x_n$, the matrix given by $K_{ij} = k(x_i, x_j)$ is positive semidefinite. The class of kernel estimators consists of functions $f(x) = \sum_{i=1}^n \alpha_i k(X_i, x)$. Suppose that $\sum_{i,j} \alpha_i \alpha_j k(X_i, X_j) \leq B^2$; then, from~\cite{bartlett2002rademacher}, $\mathfrak{R}_n(\F) \leq 2B \sqrt{\frac{E[k(X,X)]}{n}}$.
    For particular kernels it is easy to bound the term in the numerator above. For example, we consider the RBF kernel which has maximum one, yielding $\mathfrak{R}_n(\F) \le \frac{2 B}{\sqrt{n}}$. 
\end{itemize}

\subsection{Asymptotic behavior of \salad's $\hat{L}_S(h, w)$} \label{supp:salad_loss}

\begin{restatable}[]{lemma}{saladloss}
Assume that the reweighting function is Bayes-optimal, meaning that $\hat{w}(x, m) = w(x, m)$. Then, 
$$\lim_{n^{SR} \rightarrow \infty} \hat{L}(h, \hat{w}) \propto L_{CE}(h),$$ 
where $L_{CE}(h) = \E{x,m, z' = 1}{ -\log h(x, m)} + \E{x, m, z' = 0}{-\log (1 - h(x, m))}$ is the cross entropy loss on label $z' = \begin{cases} 1 & x,m \sim \p \\ 0 & x,m \sim p(\cdot | y = 0) \end{cases}$.
\label{lemma:salad_loss}
\end{restatable}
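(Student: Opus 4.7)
The plan is to take the limit of the empirical loss term by term, using the law of large numbers, and then apply the change-of-measure identity that is at the heart of the importance-weighting trick. Concretely, write $n_d^{SR} = |\D_{SR}|$ and $n_s^{SR} = |\dsim_{SR}|$, so that $n^{SR} = n_d^{SR} + n_s^{SR}$. As $n^{SR} \to \infty$ (with both pieces growing in fixed proportion), the strong law of large numbers gives
\begin{align*}
    \frac{1}{n^{SR}} \sum_{x \in \D_{SR}} \log h(x,m) \;&\longrightarrow\; \frac{n_d^{SR}}{n^{SR}} \, \E{(x,m) \sim p(\cdot\mid SR)}{\log h(x,m)}, \\
    \frac{1}{n^{SR}} \sum_{x \in \dsim_{SR}} w(x,m) \log(1-h(x,m)) \;&\longrightarrow\; \frac{n_s^{SR}}{n^{SR}} \, \E{(x,m) \sim \psim(\cdot\mid SR, y=0)}{w(x,m)\log(1-h(x,m))}.
\end{align*}
Here I have used the assumption $\hat{w} = w$ so that the sum in the second line is a genuine empirical mean of a fixed function against the simulation draws.

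The crucial step is the importance-weighting identity applied to the second expectation. Since $w(x,m) \, \psim(x,m \mid y=0) = p(x,m\mid y=0)$, integrating over the signal region and dividing by $\psim(SR\mid y=0)$ yields
\begin{align*}
    \E{(x,m) \sim \psim(\cdot\mid SR, y=0)}{w(x,m)\log(1-h(x,m))} \;=\; \frac{p(SR\mid y=0)}{\psim(SR\mid y=0)} \, \E{(x,m) \sim p(\cdot\mid SR, y=0)}{\log(1-h(x,m))}.
\end{align*}
This is the place where the Bayes-optimality assumption on the learned weight is actually used: it allows the reweighted simulation average to exactly reproduce expectations under the true background measure, which is what makes the target of the classifier $h$ well-defined.

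Combining the two limits, $\lim_{n^{SR}\to\infty} \hat{L}_S(h,\hat{w})$ equals a positive linear combination of $-\E{p(\cdot\mid SR)}{\log h(x,m)}$ and $-\E{p(\cdot\mid SR,y=0)}{\log(1-h(x,m))}$. Identifying the first term as the $z'=1$ branch of the cross-entropy (draws from $\p$) and the second as the $z'=0$ branch (draws from $p(\cdot\mid y=0)$), and absorbing the ratios $n_d^{SR}/n^{SR}$, $n_s^{SR}/n^{SR}$, and $p(SR\mid y=0)/\psim(SR\mid y=0)$ into a single overall constant, gives exactly $\hat{L}_S(h,\hat{w}) \to c \cdot L_{CE}(h)$ for some $c>0$, which is the claimed proportionality.

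The only real obstacle is bookkeeping: distinguishing the conditional distributions restricted to $SR$ from the unconditional $\p$ and $p(\cdot\mid y=0)$ appearing in the statement of $L_{CE}$, and keeping track of the two normalization factors (the sample-size ratio and the ratio of $SR$ masses under simulation vs.\ data). Since the lemma only claims proportionality, these factors collapse into a single positive constant and do not affect the minimizer of the loss; this is precisely the reason the Bayes-optimal reweighting allows $\hat{L}_S$ to act, asymptotically, as a surrogate for the cross-entropy between $\p$ and $p(\cdot\mid y=0)$ on $SR$, which is what justifies using $h$ for anomaly detection in the main text.
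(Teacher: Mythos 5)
Your proof follows essentially the same route as the paper's: apply the law of large numbers to each of the two empirical sums, then use the change-of-measure identity $w(x,m)\,\psim(x,m\mid y=0)=p(x,m\mid y=0)$ to convert the reweighted simulation expectation into an expectation under the true background. The only wrinkle is that your two limiting terms carry different prefactors (the sample-size ratios and the $SR$-mass ratio multiply only one term each), so strictly they cannot all be absorbed into a \emph{single} proportionality constant unless the pieces are balanced --- but the paper's own proof makes the analogous move by constructing the data so that the background count matches $\Nsim^{SR}$, so your argument is faithful to the intended (and equally informal) statement.
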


\begin{proof}

Let $\Ndata^{SR}$ be the number of points from $\D$ that belong to the signal region.
Under our assumptions, the empirical loss function can be written as
\begin{align*}
    \hat{L}(h, \hat{w}) \propto &- \frac{\Ndata^{SR}}{n^{SR}} \cdot \frac{1}{\Ndata^{SR}} \;\;\; \quad \sum_{\mathclap{x \in \D_{SR}}} \log h(x, m) \\
    &- \frac{\Nsim^{SR}}{n^{SR}} \cdot \frac{1}{\Nsim^{SR}} \;\; \quad \sum_{\mathclap{x \in \dsim_{SR}}} \qquad \frac{p(x, m | y = 0)}{\psim(x, m | y = 0)} \log (1 - h(x, m)).
\end{align*}

As $n^{SR} \rightarrow \infty$, the first term approaches $-\Pr(z' = 1) \cdot \E{x, m \sim \p}{\log h(x, m)} = -\Pr(z' = 1) \cdot \E{x, m | z' = 1}{\log h(x, m)}$. For the second term, we can construct $\Ndata^{SR, 0}$, the amount of data where $x$ is from $p(\cdot | y = 0)$, to be equal to $\Nsim^{SR}$ such that the expression asymptotically approaches $-\Pr(z' = 0) \cdot \E{x, m \sim \Psim}{\frac{p(x, m | y = 0)}{\psim(x, m | y = 0)} \log (1 - h(x, m))}$. Performing a change of expectation, this is equal to $-\Pr(z' = 0) \cdot \E{x, m | z' = 0}{ \log (1 - h(x, m))}$.
Putting this together, we have that
\begin{align*}
    \lim_{\mathclap{n^{SR} \rightarrow \infty}} \hat{L}(h, \hat{w}) &\propto -\Pr(z' = 1) \E{x, m | z' = 1}{\log h(x, m)} - \Pr(z' = 0) \E{x, m | z' = 0}{\log (1 - h(x, m))}  \\
    &= L_{CE}(h).
\end{align*}

\end{proof}

\section{Proofs} \label{supp:proofs}

\subsection{Proof of Theorem~\ref{thm:multicwola}}

\begin{proof}
From Theorem 3 of~\cite{Fu20}, we have that $L_C(\hat{f}) - L_C(f^\star)$ is bounded by the traditional ERM generalization gap of $L_C(\bar{f}) - L_C(f^\star)$, where $\bar{f} = \argmin{f \in \F}{\frac{1}{n}\sum_{i = 1}^m \ell(f(x_i, m_i), y_i)}$ is the classifier learned on labeled data, plus the term  $ \frac{c_1}{e_{\min} a_{\min}^5} \Big(\sqrt{\frac{k}{n}} + \frac{c_2 k}{\sqrt{n}} \Big)$.

We can apply standard learning theory bounds on $L_C\bar{f}) - L_C(f^\star)$. In particular, this quantity is equal to
\begin{align*}
    L_C(\bar{f}) - L_C(f^\star) &= (L_C(\bar{f}) - \hat{L}_C(\bar{f})) + (\hat{L}_C(\bar{f}) - \hat{L}_C(f^\star)) + (\hat{L}_C(f^\star) - L_C(f^\star)) \\
    &\le L_C(\bar{f}) - \hat{L}_C(\bar{f}) + \hat{L}_C(f^\star) - L_C(f^\star) \\
    &\le 2\sup_{f \in \F} |L_C(f) - \hat{L}_C(f) |, 
\end{align*}
where we have used the fact that $\hat{L}_C(\bar{f}) \le \hat{L}_C(f^\star)$. Then, using uniform convergence bounds, such as Theorem 3.3 of~\cite{mohri2018foundations}, we have
\begin{align*}
    L_C(\bar{f}) - L_C(f^\star) \le 2 \bigg(2 \mathfrak{R}_n (\ell \circ \F) + \sqrt{\frac{\log 2 / \delta}{2n}} \bigg).
\end{align*}

This gives us our desired result. 

\end{proof}

\subsection{Proof of Theorem~\ref{thm:multisalad}}
\begin{proof}

We define the true (cross-entropy) loss as
\begin{align*}
    L_S(h, w) \texttt{=} -\Pr(z' = 1) \E{z' = 1}{\log h(x, m)} - \Pr(z' = 0) \E{x, m \in \Psim^{SR}}{w(x,  m) \log (1 - h(x, m))},
\end{align*}

where $z' = 1$ for $x, m \sim \mathcal{P}$ and $0$ for $x, m \sim \mathcal{P}(\cdot | y = 0)$. 
Next, define $w(x, m) = \frac{q(x , m | z = 1)}{q(x, m | z = 0)}$ and let $\hat{w}$ be the weight ratio learned by our model. Let $\hat{h} = \argmin{h \in \mathcal{H}} \hat{L}_S(h, \hat{w})$, and let $h^\star = \argmin{h \in \mathcal{H}} L(h, w^\star)$. Intuitively, $h^\star$ corresponds to the true difference between $\Pdata^{SR}$ and $\Pdata^{SR}(\cdot | y = 0)$. We can first decompose the generalization error as
\begin{align}
    L_S(\hat{h}, \hat{w}) - L_S(h^\star, w) &= [L_S(\hat{h}, \hat{w}) - \hat{L}_S(\hat{h}, \hat{w})] + [\hat{L}_S(\hat{h}, \hat{w}) - \hat{L}_S(h^\star, \hat{w})] \\
    &+ [\hat{L}_S(h^\star, \hat{w}) - \hat{L}_S(h^\star, w)] + [\hat{L}_S(h^\star, w) - L_S(h^\star, w)].
\end{align}

We know that $\hat{L}_S(\hat{h}, \hat{w}) \le \hat{L}_S(h^\star, \hat{w})$, so
\begin{align*}
    L_S(\hat{h}, \hat{w}) &- L_S(h^\star, w) \le |L_S(\hat{h}, \hat{w}) - \hat{L}_S(\hat{h}, \hat{w})| + |\hat{L}_S(h^\star, w) - L_S(h^\star, w)| \\
    &+ \hat{L}_S(h^\star, \hat{w}) - \hat{L}_S(h^\star, w) \\
    &\le \sup_{h, w} |L_S(h, w) - \hat{L}_S(h, w)| + |\hat{L}_S(h^\star, w) - L_S(h^\star, w)| + \hat{L}_S(h^\star, \hat{w}) - \hat{L}_S(h^\star, w).
\end{align*}

We first bound $\sup_{h, w} |L_S(h, w) - \hat{L}_S(h, w)|$. For notation, we rewrite $L_S(h, w)$ as $L_S(h, g)$, where $w(x,  m) = \frac{g(x, m)}{1 - g(x, m)}$ and $g$ belongs to some function class $\mathcal{G}$. Then, using Theorem 3.3 from~\cite{mohri2018foundations}, we get that $\sup_{h, w} |L_S(h, w) - \hat{L}_S(h, w)| \le 2 \mathfrak{R}_{n^{SR}}(\ell_S \circ \{\mathcal{H}, \mathcal{G}\}) + \sqrt{\frac{\log 1 / \delta}{2n^{SR}}}$ with probability at least $1 - \delta$, where $\ell_S \circ \{ \mathcal{H}, \mathcal{G}\}$ is defined as satisfying $\ell_S(h(x, m), g(x, m), y) = - y \log h(x, m) - (1 - y) \frac{g(x, m)}{1 - g(x, m)} \log (1 - h(x, m))$ for $h \in \mathcal{H}, g \in \mathcal{G}$.

Next, we bound $|\hat{L}_S(h^\star, w) - L_S(h^\star, w)|$. Let $W = \max w(x, m) < \infty$ be the maximum density ratio, and let $B_1 = \max_{x, m} \{-\log h^\star(x, m), -\log (1 - h^\star(x, m))\}$. Assume that $B_1 < \infty$.
We can apply standard concentration inequalities here (Hoeffding) to get that $|\hat{L}_S(h^\star, w) - L_S(h^\star, w)| \le W B_1 \sqrt{\frac{\log 2 /\delta }{2n^{SR}}}$ with probability at least $1 - \delta$. 

Finally, we bound $\hat{L}_S(h^\star, \hat{w}) - \hat{L}_S(h^\star, w)$. We can write $\hat{L}_S(h^\star, \hat{w}) - \hat{L}_S(h^\star, w)$ as
\begin{align}
    \hat{L}_S(h^\star, \hat{w}) - \hat{L}_S(h^\star, w) = \frac{1}{n^{SR}} \sum_{x \in \dsim_{SR}} (\hat{w}(x, m) - w(x, m)) \cdot (-\log (1 - h^\star(x, m))).
\end{align}

Define $\eta = \max (-\log (1 - h^\star(x, m))) \ge 0$ for $x, m \in \dsim_{SR}$, which is small as long as $h^\star(x, m)$ sufficiently classifies $x$ and is hence a property of how separated the reweighted simulation and true data is. Then,
\begin{align}
    |\hat{L}_S(h^\star, \hat{w}) - \hat{L}_S(h^\star, w)| \le \frac{\eta}{n^{SR}} \sum_{x, m \in \dsim_{SR}} |\hat{w}(x, m) - w(x, m) |.
\end{align}

Recall that $\hat{w}(x, m) = \frac{\hat{g}(x, m)}{1 - \hat{g}(x, m)}$ and $w(x, m) = \frac{g^\star(x, m)}{1 - g^\star(x, m)}$ where $g^\star(x, m) = \Pr(z = 1 | x, m)$, so $|\hat{w}(x, m) - w(x, m)| = \frac{|\hat{g}(x, m) - g^\star(x,  m)|}{(1 - \hat{g}(x, m))(1 - g^\star(x, m))}$. This denominator is greater than $(1 - \hat{g}_{\max})(1 - g^\star_{\max})$. Then,
\begin{align}
    |\hat{L}_S(h^\star, \hat{w}) - \hat{L}_S(h^\star, w)| \le \frac{\eta}{(1 - \hat{g}_{\max})(1 - g^\star_{\max}) n^{SR}} \sum_{x, m \in \dsim_{SR}} |\hat{g}(x, m) - g^\star(x, m) |.
\end{align}

We now look at the classifier for training $g$. The per-point cross entropy loss for $(x, m, z)$ is $\ell(g(x, m), z) = -\log g(x, m)$ for $z = 1$ and $-\log (1 - g(x, m))$ for $z = 0$. WLOG, assume for some $x$ and $m$, $g^\star(x, m) > \hat{g}(x, m)$. Then $|\ell(g^\star(x, m), 1) - \ell(\hat{g}(x, m), 1)| = \log \frac{g^\star(x, m)}{\hat{g}(x, m)} = \log \Big(1 + \Big(\frac{g^\star(x, m)}{\hat{g}(x, m)} - 1\Big)\Big) \ge \frac{g^\star(x, m) / \hat{g}(x, m) - 1}{g^\star(x, m) / \hat{g}(x, m)} = \frac{g^\star(x, m) - \hat{g}(x, m)}{g^\star(x, m)} \ge |g^\star(x, m) - \hat{g}(x, m)| $ and $|\ell(g^\star(x, m), 0) - \ell(\hat{g}(x, m), 0)| = \log \frac{1 - \hat{g}(x, m)}{1 - g^\star(x, m)} = \log \Big(1 + \Big(\frac{1 - \hat{g}(x, m)}{1 - g^\star(x, m)} - 1\Big)\Big) \ge \frac{(1 - \hat{g}(x, m)) / (1 - g^\star(x, m)) - 1}{(1 - \hat{g}(x, m)) / (1 - g^\star(x, m))} = \frac{g^\star(x, m) - \hat{g}(x, m)}{1 - \hat{g}(x, m)} \ge |g^\star(x, m) - \hat{g}(x, m)|$, where we use the inequality $\log(1 + x) \ge \frac{x}{1 + x}$ for $x > -1$. Therefore, with probability $1 - \delta$,
\begin{align*}
     |\hat{L}_S(h^\star, \hat{w}) &- \hat{L}_S(h^\star, w)| \le \frac{\eta}{(1 - \hat{g}_{\max})(1 - g^\star_{\max}) n^{SR}} \sum_{\mathclap{x, m \in SR}} |\ell(\hat{g}(x, m), z) - \ell(g^\star(x, m), z) | \\
     &\le \frac{\eta \Nsim^{SR}}{(1 - \hat{g}_{\max})(1 - g^\star_{\max}) n^{SR}} \bigg(\E{}{|\ell(\hat{g}(x, m), z) - \ell(g^\star(x, m), z) |} +  B_2 \sqrt{\frac{\log 2 / \delta}{2\Nsim^{SR}}}\bigg),
\end{align*}

where $B_2 = \max_{x, y} \{\ell (\hat{g}(x, m), z), \ell (g^\star(x, m), z) \} = -\log (\min \{\hat{g}_{\min}, g^\star_{\min}\})$. We assume that $B_2$ is finite, so there exists a constant $c$ such that
\begin{align*}
    |\hat{L}_S(h^\star, \hat{w}) - \hat{L}_S(h^\star, w)| &\le \frac{\eta \Nsim^{SR}}{(1 - \hat{g}_{\max})(1 - g^\star_{\max}) n^{SR}} \bigg(c |L(\hat{g}) - L(g^\star)| +  B_2 \sqrt{\frac{\log 2 / \delta}{2\Nsim^{SR}}}\bigg),
\end{align*}

where $L(g) = \E{x, m \in SR}{\ell(g(x, m), z)}$. Since $g^\star(x, m)$ is Bayes optimal, $|L(\hat{g}) - L(g^\star)| = L(\hat{g}) - L(g^\star) = L(\hat{g}) - \hat{L}(\hat{g}) + \hat{L}(\hat{g}) - \hat{L}(g^\star) + \hat{L}(g^\star) - L(g^\star) \le 2\sup_{g \in \mathcal{G}} |L(g) - \hat{L}(g)|$. From Theorem 3.3 in~\cite{mohri2018foundations}, this is bounded by $2 \mathfrak{R}_{n^{SB}}(\ell \circ \mathcal{G}) + \sqrt{\frac{\log 1/\delta}{2n^{SB}}}$ with probability at least $1 - \delta$. Then, applying a union bound, with probability $1 - \delta$, we have
\begin{align*}
    |\hat{L}_S(h^\star, \hat{w}) &- \hat{L}_S(h^\star, w)| \\
    &\le \frac{\eta \Nsim^{SR}}{(1 - \hat{g}_{\max})(1 - g^\star_{\max}) n^{SR}} \bigg(4c \mathfrak{R}_{n^{SB}}(\ell \circ \mathcal{G}) + 2c\sqrt{\frac{\log 2 /\delta}{2 n^{SB}}} +  B_2\sqrt{\frac{\log 4 / \delta}{2\Nsim^{SR}}}\bigg).
\end{align*}

Putting everything together with another union bound, with probability $1 - \delta$, the generalization error is at most
\begin{align}
    L_S(\hat{h}, \hat{w}) &- L_S(h^\star, w) \le 2\mathfrak{R}_{n^{SR}}(\ell_S \circ \{ \mathcal{H}, \mathcal{G}\}) + (1 + WB_1) \sqrt{\frac{\log 8 / \delta}{2n^{SR}}} \\
    &+ \frac{\eta \Nsim^{SR}}{(1 - \hat{g}_{\max})(1 - g^\star_{\max}) n^{SR}} \bigg(4c \mathfrak{R}_{n^{SB}}(\ell \circ \mathcal{G}) + 2c  \sqrt{\frac{\log 4 /\delta }{2 n^{SB}}}+ B_2 \sqrt{\frac{\log 8 / \delta}{2\Nsim^{SR}}} \bigg).
\end{align}

\end{proof}

\section{Experiment Details}\label{supp:exp}

\subsection{\multicwola Experiments}

For the \multicwola experiment, we used the anomaly and simulation data from the Pythia 8 simulations in the LHC Olympics Dataset to create an unlabeled dataset we want to perform anomaly detection on~\cite{Kasieczka:2021xcg}. We have $k = 3$, and construct $M_i(m)$ based on the thresholds $[[3.3, 3.7], [0.09, 0.13], [0.3, 0.35]]$ on the first three features.
 For standard \cwola, only the third feature is regarded as the resonant feature, and it is thresholded with the interval $[0.3, 0.35]$. We constructed training datasets of varying sizes with class balance $\Pr(y=1) = 0.149$. We used one test dataset with $65755$ randomly sampled anomaly points and $161658$ randomly sampled background points.
 
All methods were trained using scikit-learn's MLPClassifier with max\_iter=5000. For $\multicwola$'s weak supervision step, we learn the parameters of the graphical model using SGD and PyTorch~\cite{paszke2019pytorch} with class balance $\Pr(y = 1) = 0.25$, $30000$ epochs, and learning rate $=1e-6$.

\subsection{\multisalad Experiments}

\paragraph{Setup} 
We use MLPs from Keras~\citep{chollet2015keras}, each with $3$ hidden layers of dimension $32$, ReLu activation, and trained with cross-entropy loss and the Adam optimizer. We train for $50$ epochs, batch size $200$, and default parameters otherwise. Finally, we evaluate our approach on a new test set containing $200000$ background points and $200000$ anomaly points. This test set is used to produce the signal efficiency to rejection rate.
All experiments were run on a personal laptop.

\paragraph{Additional Results}

In Figure~\ref{fig:runs}, we show our results on individual runs. This is because computing the confidence intervals of these curves averaged across the $10$ random runs is too noisy due to the magnitude of the reciprocal 1/FPR.

\begin{figure}[t]
    \centering
    \includegraphics[width=0.44\textwidth]{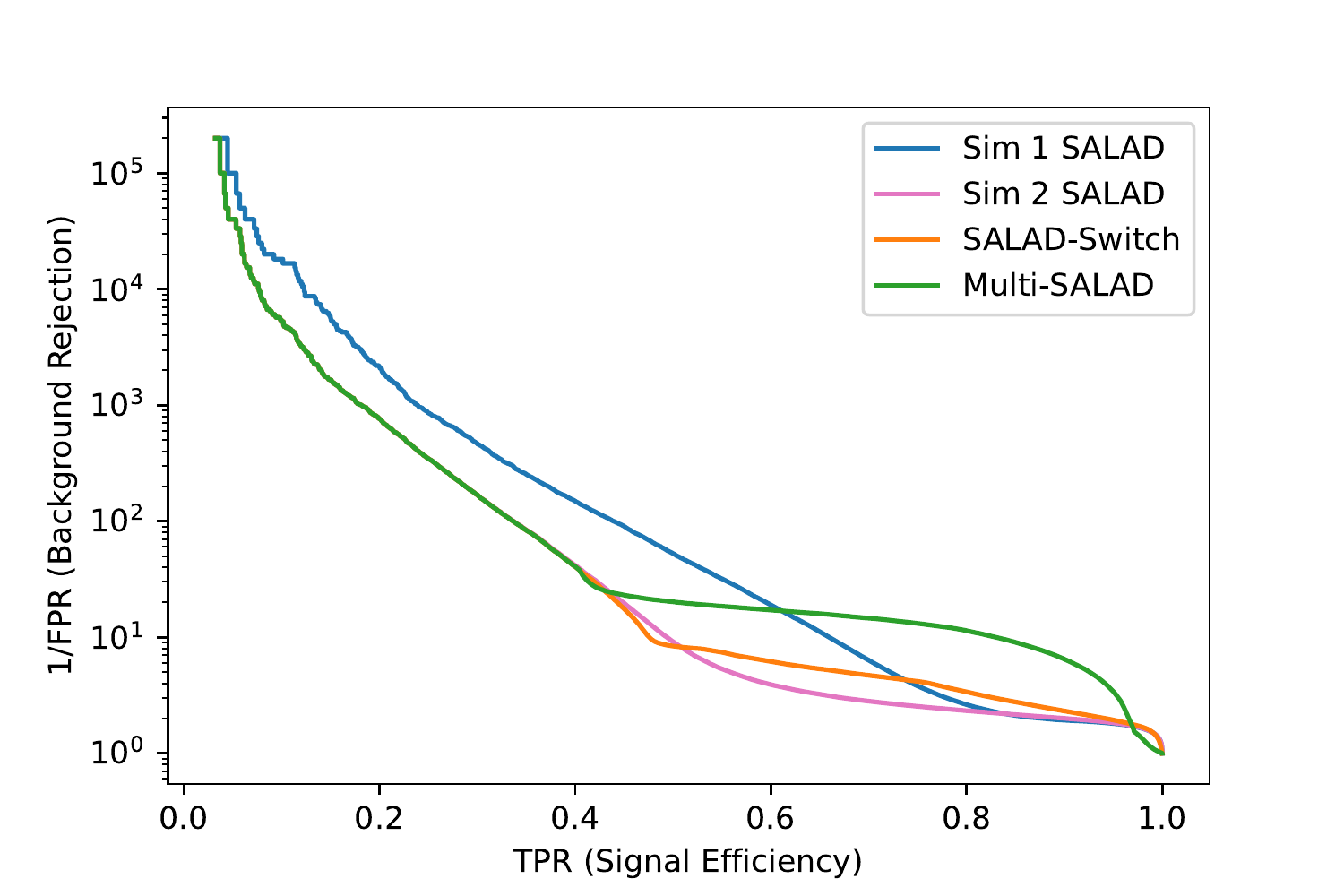}
    \includegraphics[width=0.44\textwidth]{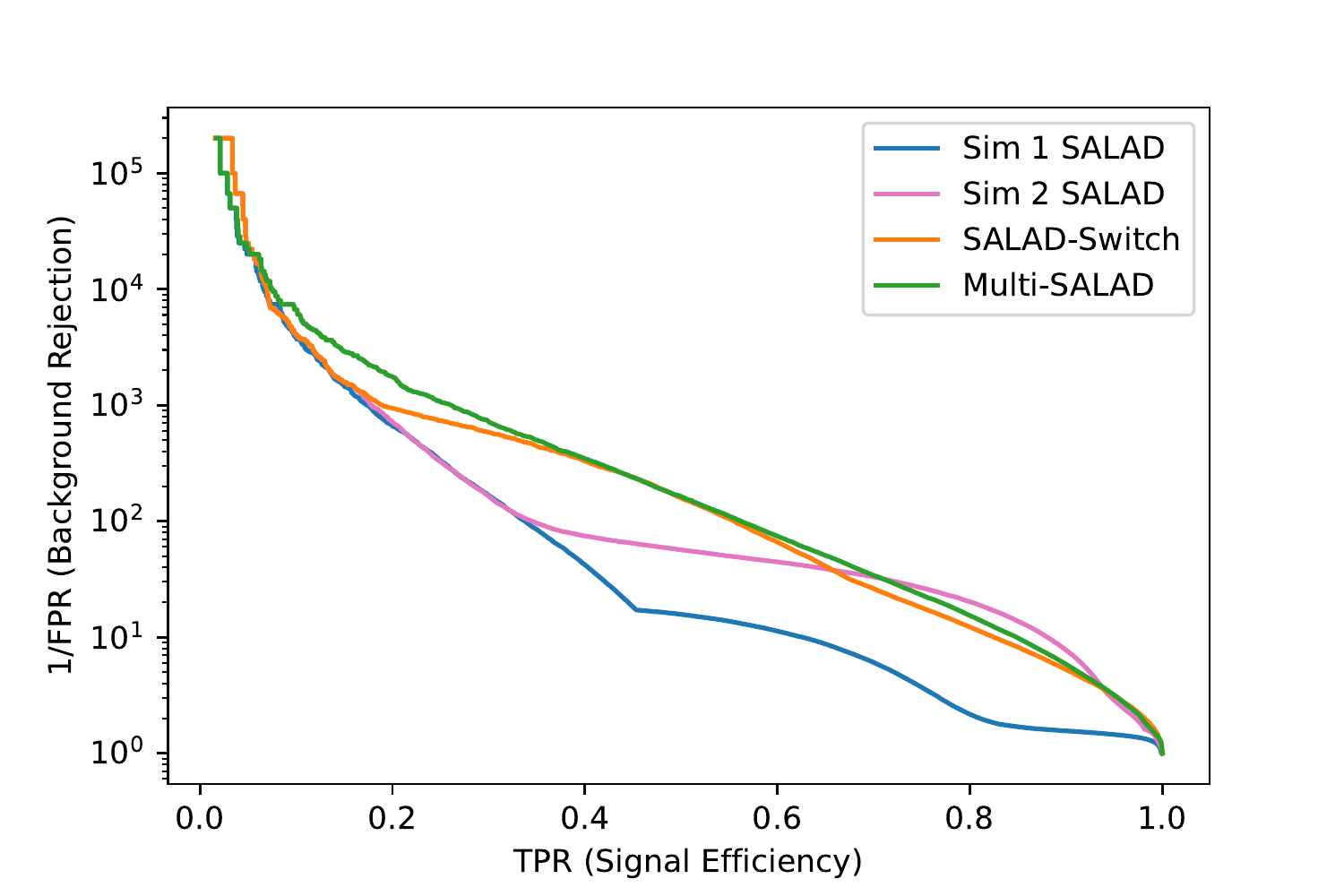}
    \includegraphics[width=0.44\textwidth]{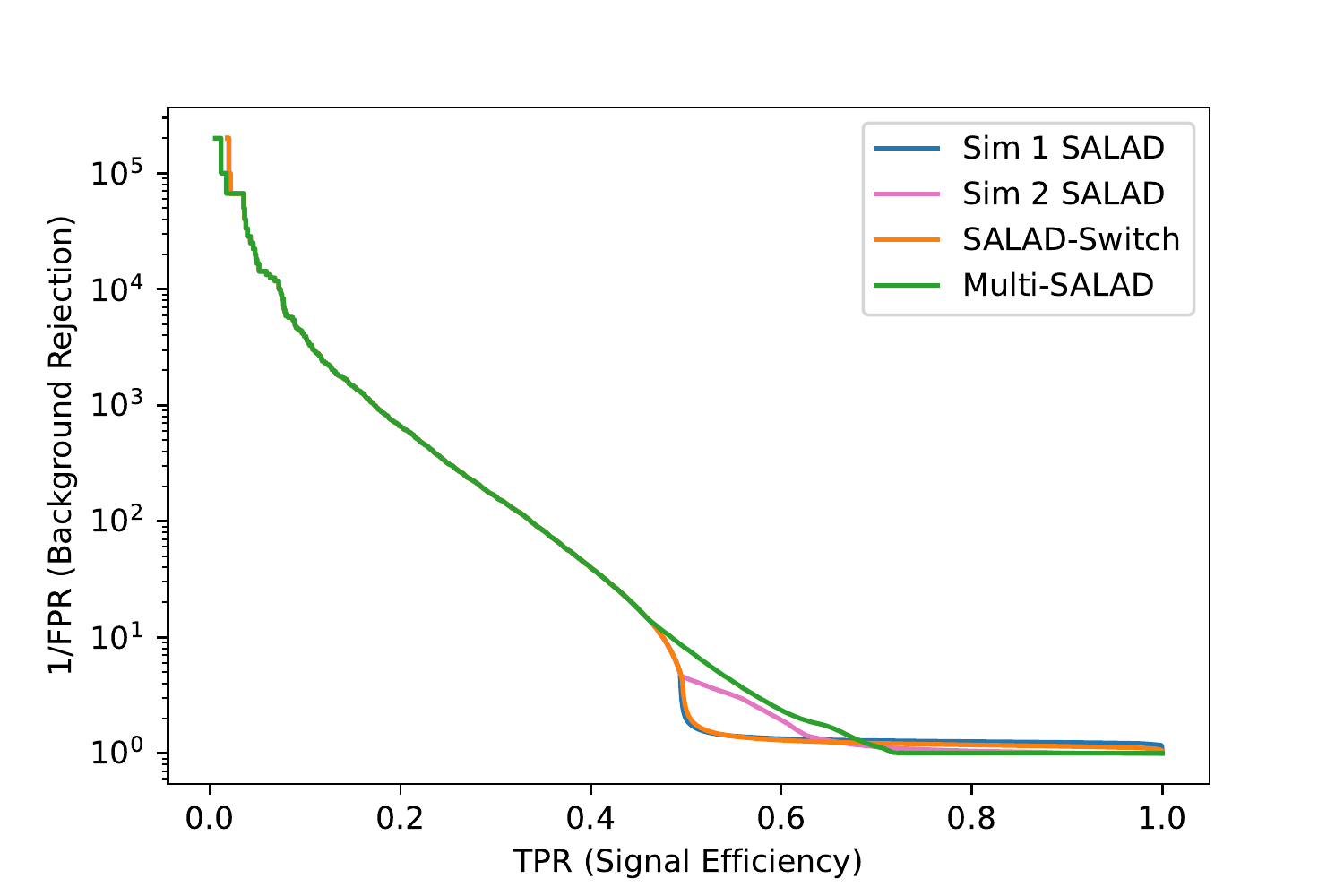}
    \includegraphics[width=0.44\textwidth]{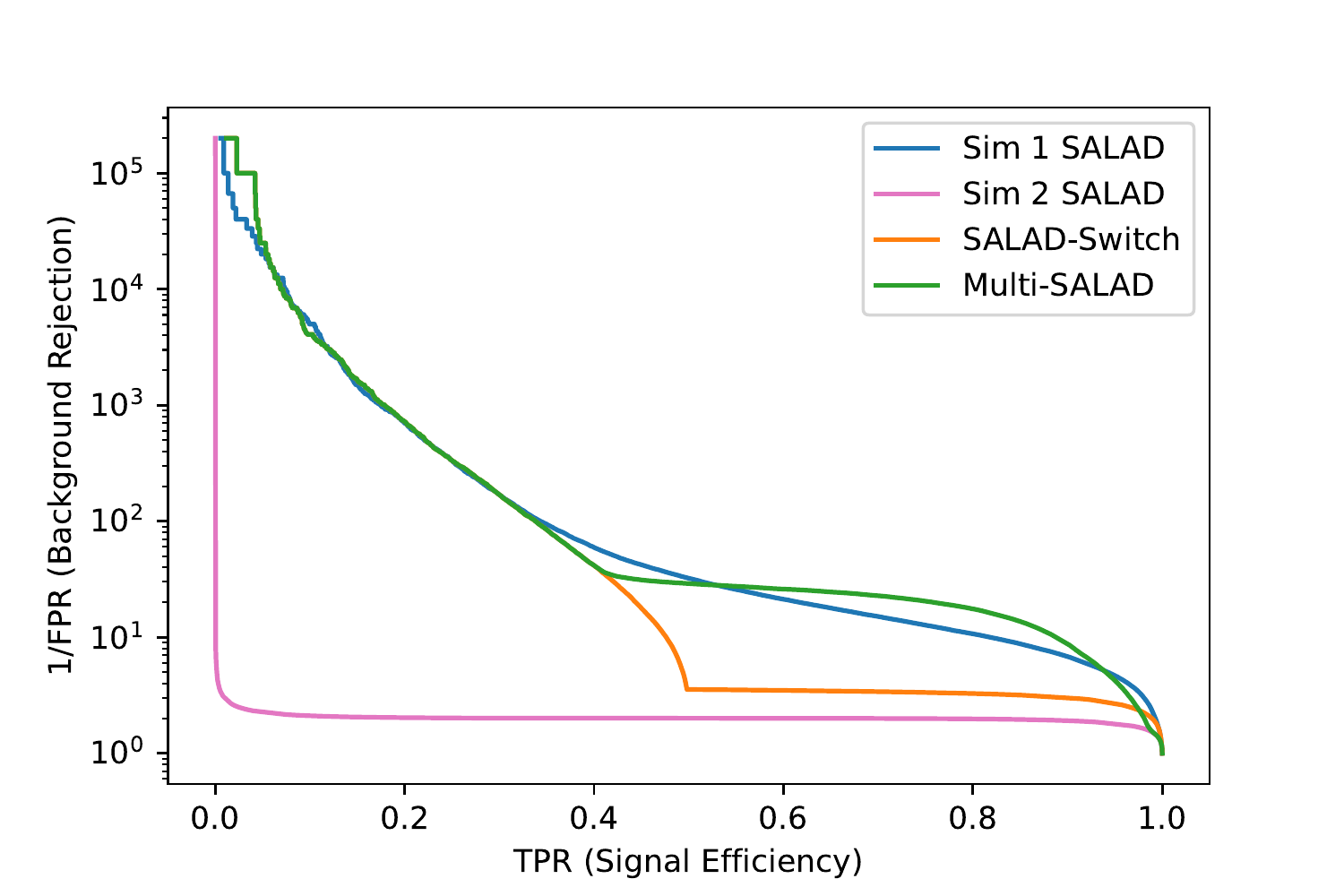}
    \includegraphics[width=0.44\textwidth]{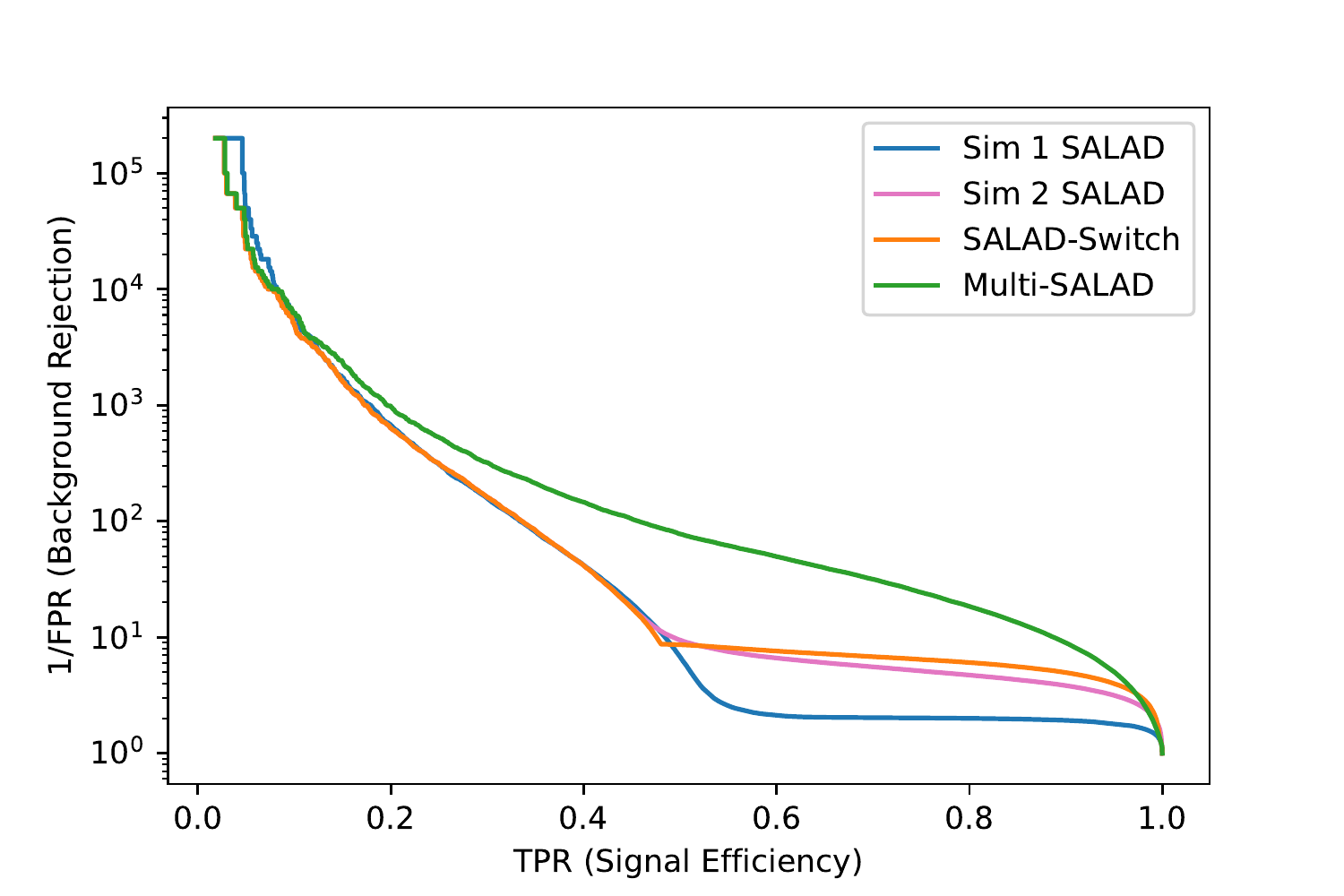}
    \includegraphics[width=0.44\textwidth]{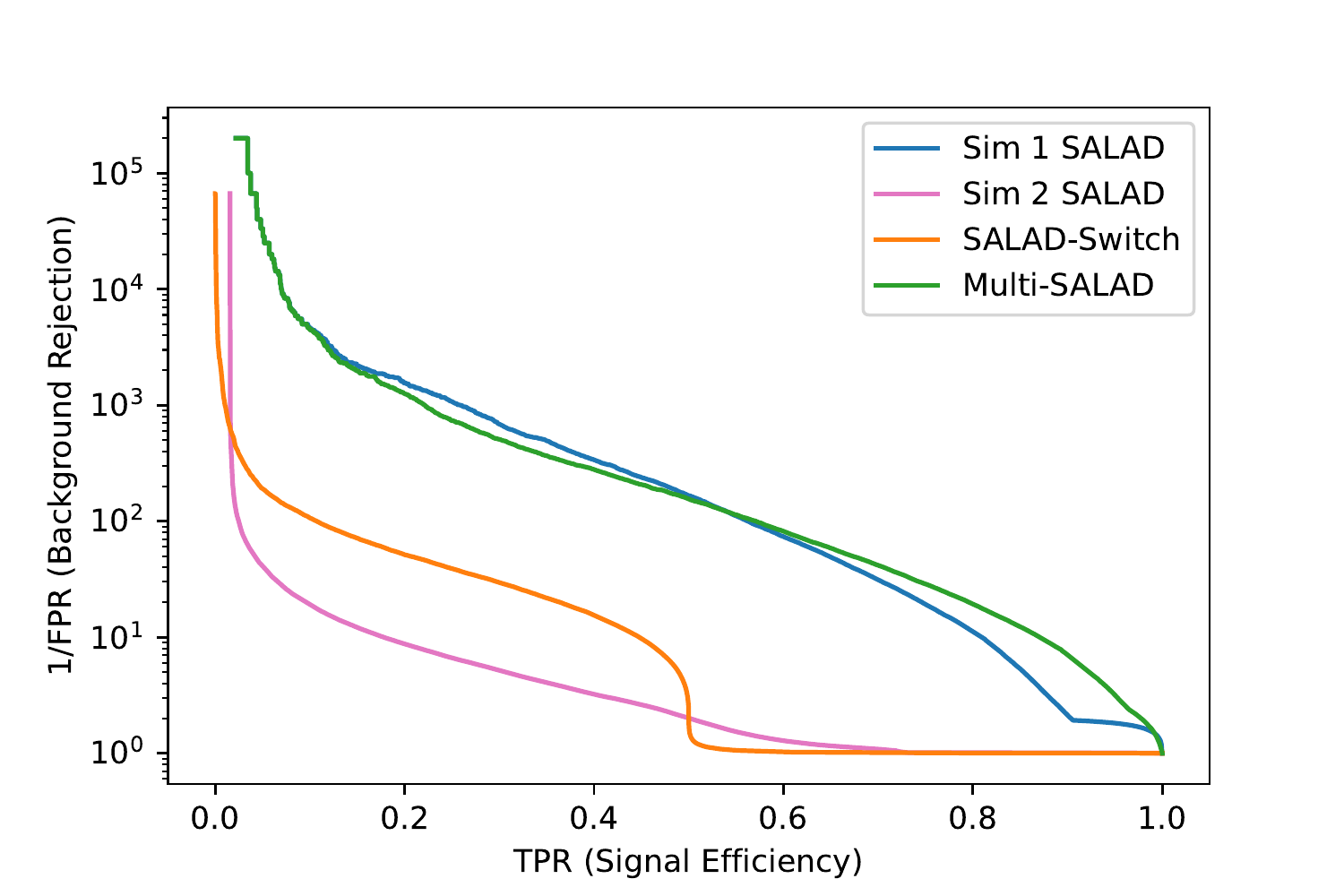}
    \includegraphics[width=0.44\textwidth]{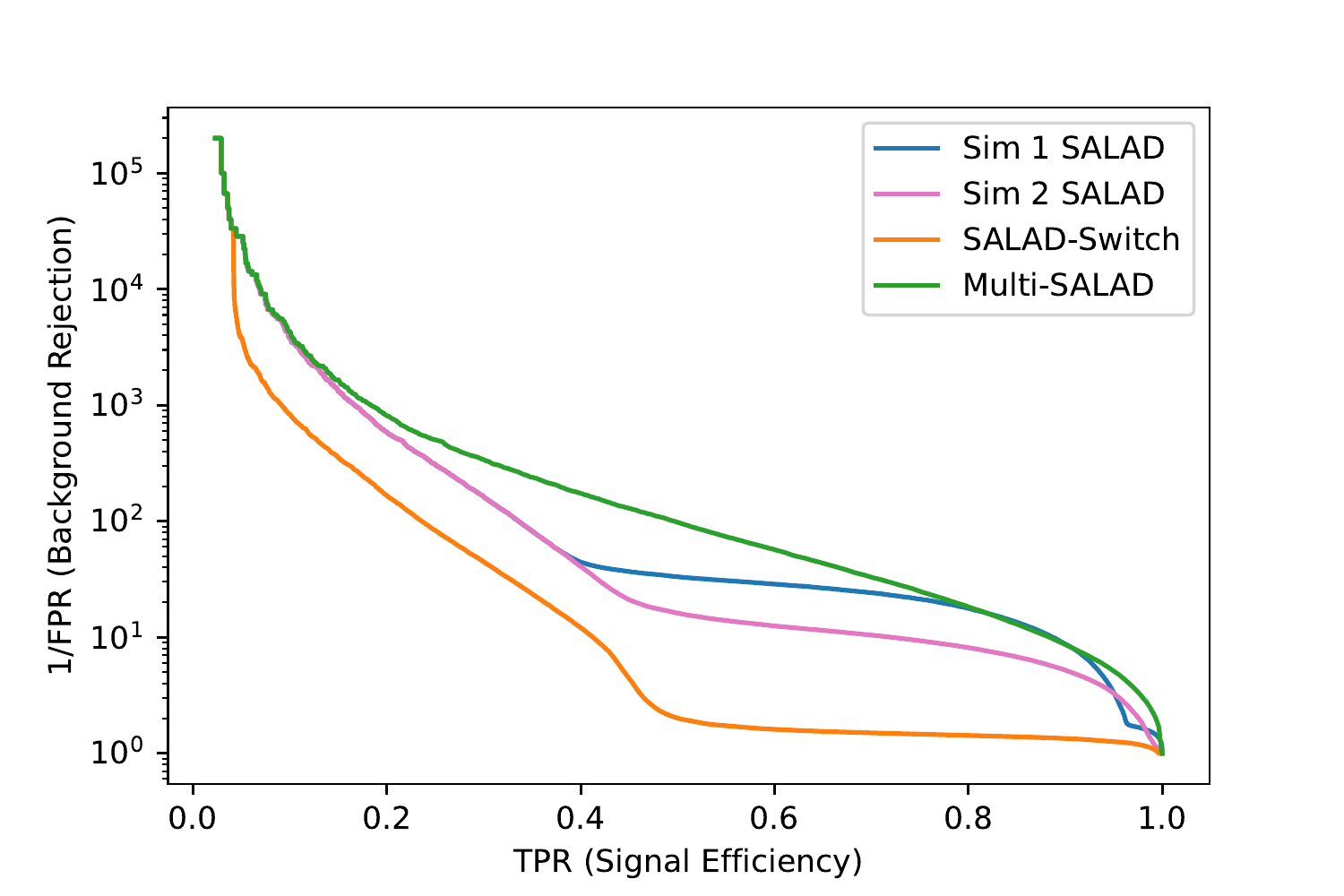}
    \includegraphics[width=0.44\textwidth]{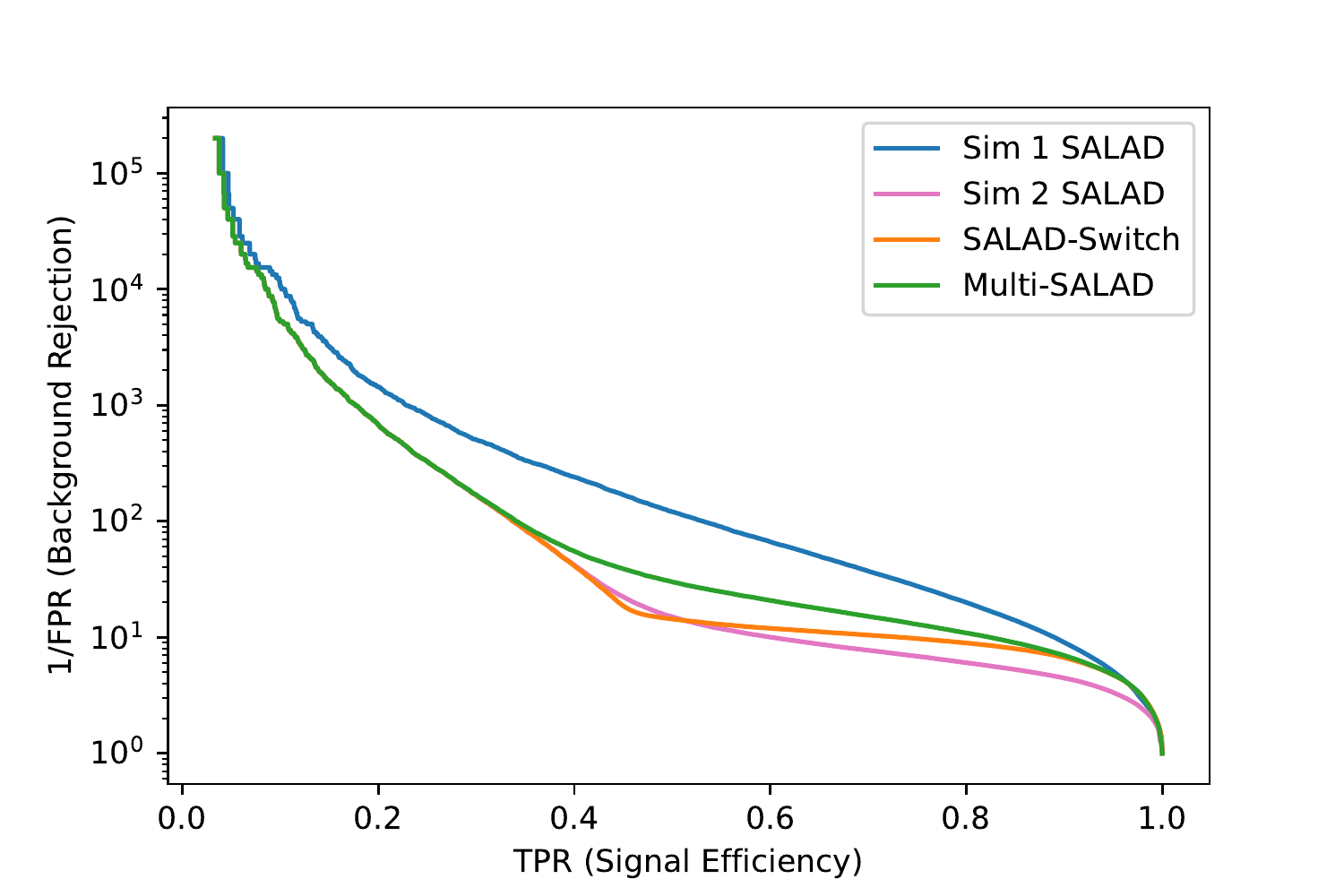}
    \includegraphics[width=0.44\textwidth]{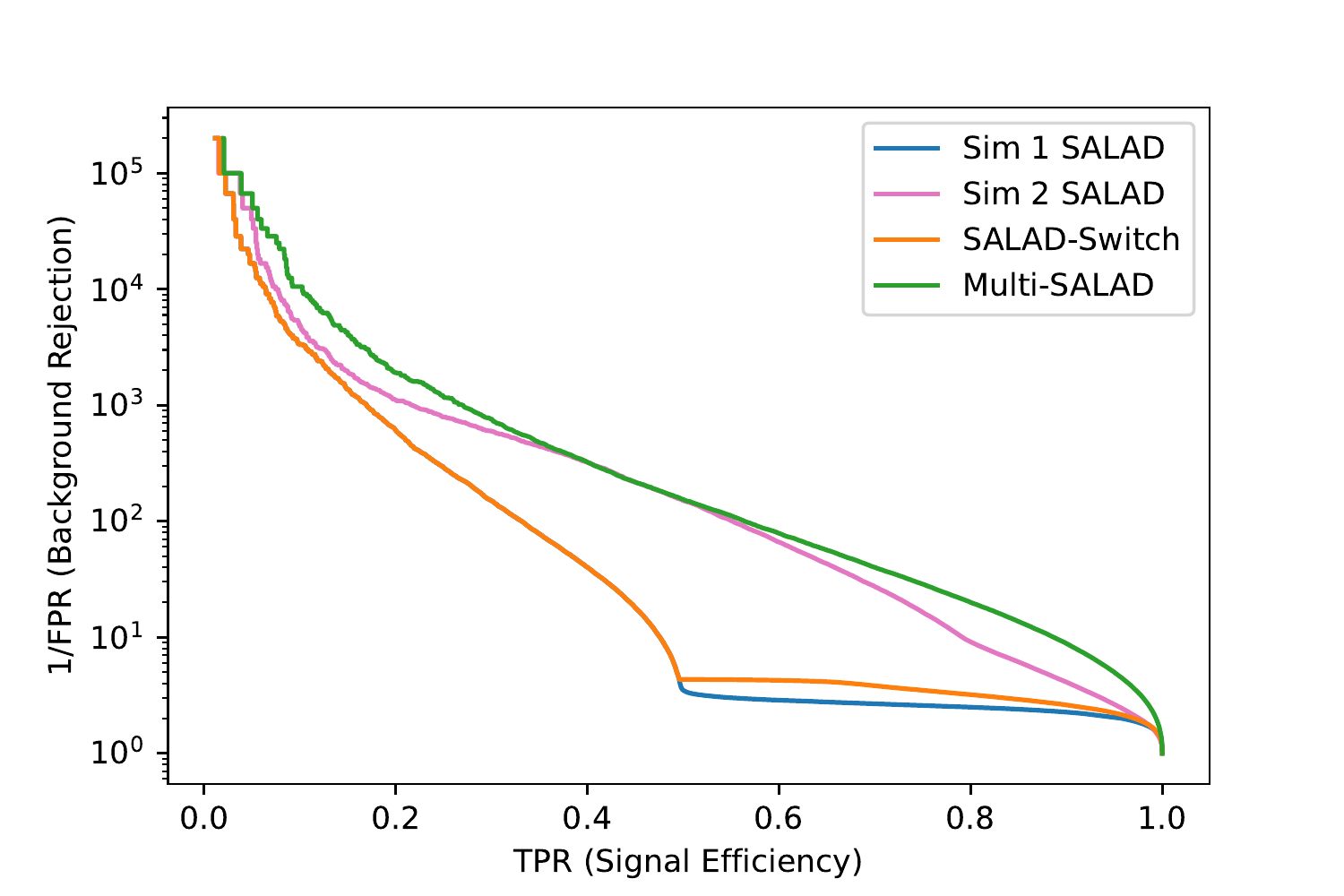}
    \includegraphics[width=0.44\textwidth]{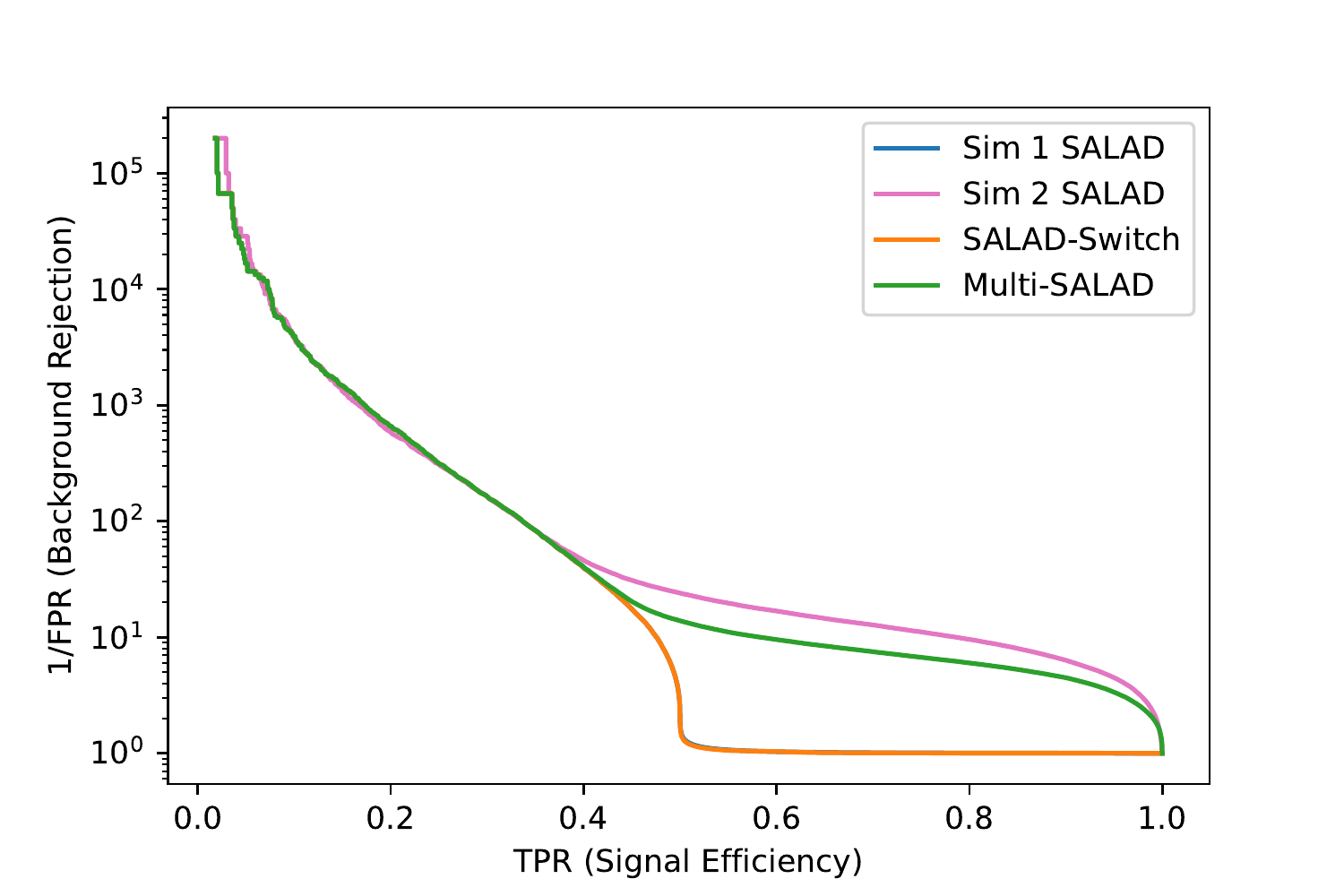}

    \caption{Results on individual runs.}
    \label{fig:runs}
\end{figure}

\end{document}